\newif\iflong 
\longtrue
\iflong
\documentclass[a4paper,USenglish,cleveref,final,nolineno%
]{socg-lipics-v2021} 

\iffalse 
\newcommand{\Therese}[1]{\textcolor{brown}{TB: #1}}
\newcommand{\Guenter}[1]{\textcolor{blue}{GR: #1}}
\else
\newcommand{\Therese}[1]{}
\newcommand{\Guenter}[1]{}
\fi

\hideLIPIcs
 \makeatletter
\let\@oddfoot\@empty 
\makeatother
\else
\documentclass[a4paper,USenglish,cleveref]{socg-lipics-v2021} 
\fi

\usepackage{amssymb}
\usepackage{amsmath}
\usepackage{amsthm}
\usepackage{hyperref}
\usepackage[textsize=scriptsize]{todonotes}

\usepackage{url}
\usepackage{color}

\usepackage{graphicx}

\usepackage{xspace}

\usepackage[ruled,vlined,linesnumbered,titlenumbered]{algorithm2e}
\DontPrintSemicolon
\SetKwInput{Input}{Input}
\SetKwInput{Output}{Result}
\SetKw{KwBreak}{break}
\SetKw{Break}{break}
\SetKw{Continue}{continue}
\SetKwRepeat{Do}{do}{while}

\newcommand{\LL}{\ensuremath{L}\xspace} 

\newcommand{\B}{{\ensuremath{B}}\xspace}
\newcommand{\state}{\textit{state}}

\newcommand{\available}{\ensuremath{\text{\tt free}}\xspace}

\newcommand{\blocked}{\text{\tt blocked}\xspace}
\newcommand{\clique}{\textit{clique}}
\newcommand{\calT}{{\ensuremath{\mathcal{T}}}\xspace}
\newcommand{\D}{\ensuremath{D}\xspace}  

\newcommand{\I}{\ensuremath{I}\xspace}  
\newcommand{\depth}{\ensuremath{\mathrm{depth}}\xspace}  
\newcommand{\search}{\ensuremath{\textit{search}}\xspace}  
\newcommand{\highest}{\ensuremath{z_{\text{high}}}\xspace}  
\newcommand{\dom}{\ensuremath{z_{\text{dom}}}\xspace}  
\newcommand{\undom}{\ensuremath{z_{\text{undom}}}\xspace}  
\newcommand{\h}{\ensuremath{h}\xspace}  
\renewcommand{\d}{\ensuremath{d}\xspace}  
\renewcommand{\u}{\ensuremath{u}\xspace}  

\newcommand{\NIL}{\text{\tt nil}\xspace}  
\newcommand{\node}{\ensuremath{a}\xspace}  
\newcommand{\parent}{\text{\rm parent}\xspace}
\newcommand{\Find}{\text {\sc Find}\xspace}
\newcommand{\Union}{\textup{\textsc{Union}}\xspace}
\newcommand{\M}{\ensuremath{M}\xspace}

\newcommand{\therese}[1]{\Therese{#1}}  

\title{Solving Problems in RDV Graphs Faster and Easier}

\author{Therese Biedl}{David R.~Cheriton School of Computer Science,
University of Waterloo,
Canada}{biedl@uwaterloo.ca}{https://orcid.org/0000-0002-9003-3783}{Supported
by NSERC, FRN RGPIN-2020-03958.   Research done while the first author visited Freie Universität Berlin 
during her sabbatical in the winter term 2025/26.}
\author{Günter Rote}{Institut für Informatik,
Freie Universität Berlin
}{rote@inf.fu-berlin.de}{https://orcid.org/0000-0002-0351-5945}{}

\Copyright{Therese Biedl and Günter Rote}

\ccsdesc{Mathematics of computing~Graph algorithms}

\keywords{RDV graph, Dominating Set, Independent Set, Matching, Linear Time}

\let\paragraph=\subparagraph
\begin{document}
\maketitle

\begin{abstract}
In interval graphs and many generalizations thereof, problems such as {\sc Matching}, {\sc IndependentSet} and {\sc DominatingSet} can be solved with simple greedy algorithms in linear (i.e., $O(m+n)$) time.
However, it is sometimes possible to {improve} on this runtime
if a suitable representation as intersection graph is given; for example
it is possible to find a minimum dominating set in an interval graph
in $O(n)$ time.
We study RDV graphs, a generalization of interval graphs, namely, the intersection graphs of  directed paths in a rooted tree. It was known that the above three problems can be solved in $O(n\log n)$ time if a suitably small representation is given. 
In this paper,
we reduce the runtime to $O(n)$.
Furthermore, while the previous results
used non-trivial operations 
such as range-minimum queries and
ray-shooting queries, we need nothing more
complicated than lists and (for {\sc Matching}) a \Union-\Find data structure.
\end{abstract}

\section{Introduction}

We study efficient algorithms for graphs that are
\emph{RDV graphs}, i.e., {\bf v}ertex-intersection graphs of
{\bf d}irected paths in a {\bf r}ooted tree.   (Formal definitions are
given in Section~\ref{sec:background}.)   This graph class
 contains interval graphs and is a subclass of the chordal graphs;
so any problem that can be solved efficiently in chordal graphs
(such as {\sc IndependentSet}, which is NP-hard in general but
can be solved with a linear-time greedy algorithm in chordal graphs \cite{Gavril72}) can
also be solved efficiently in RDV graphs.   In fact,
RDV graphs are a subclass of so-called strongly chordal graphs.
In strongly chordal graphs, problems such as {\sc DominatingSet}
and {\sc Matching} can be solved in linear time with a greedy algorithm
\cite{DahlhausKarpinski,Farber84},
while no linear-time algorithms are known for chordal graphs (and in fact
{\sc DominatingSet} is NP-hard in chordal graphs \cite{BoothJ82}).
\iflong\else\looseness-1 \fi

In the above algorithms, ``linear'' refers to the input size, i.e., the runtime is $O(m+n)$ for a graph with $n$ vertices and $m$ edges.   In contrast to this, later papers considered 
algorithms whose runtime depends only on $n$, assuming that a suitably small representation of the graph as intersection graph of objects is given.   Of particular interest to the current work is  a recent string of papers where the first author and Gokhale  
argued that the runtime of the above greedy algorithms can be reduced to $O(n\log n)$ in RDV graphs 
\cite{BG-CCCG24,BG-CGT26,BG-SOFSEM26,PrashantThesis}.%
\footnote{{\sc Matching} is addressed in
\iflong
the conference paper
  \cite{BG-CCCG24}, and in its journal version
  \cite{BG-CGT26} with a simplified approach.
  \else
  \cite{BG-CCCG24} and
  \cite{BG-CGT26}.
  \fi
  {\sc DominatingSet} is addressed in \cite{BG-SOFSEM26}, while {\sc IndependentSet} is addressed only in Gokhale's Master's thesis \cite{PrashantThesis}, which summarizes all these results.}

Specifically, the runtime of these  algorithms is $O(|\calT|+n\log n)$,
where $\calT$ is the tree used
for the representation of~$G$.   There always exists a
representation with $|\calT|\in O(n)$, and so the runtime is
$O(n\log n)$ if the given representation is small enough
(an assumption that we will make throughout). 

The crucial idea for these algorithms with runtime depending only on $n$ and not on $m$
is to reframe a given tree representation in geometric terms,
namely as an intersection graph
of T-shapes. Biedl and Gokhale define for every vertex of $G$ a horizontal
and a vertical segment whose union forms a T such that 
two vertices are adjacent if and only if their Ts intersect.
One can then express the operations needed by the greedy algorithm 
as queries among these segments, and the authors showed
that it suffices to use
range-maximum queries and orthogonal ray-shooting queries in a 
(dynamically changing) subset of the segments. 
Prior results allow
these queries to be implemented with runtime $O(\log n)$:
McCreight's priority search trees support range-maximum queries
\cite{McCreight85}, and Giyora and Kaplan gave 
data structures for orthogonal ray-shooting queries \cite{GiyoraK09}. 
However, neither of these data structures is
straightforward to implement, 
especially in a dynamic setting.

In this paper, we improve on these algorithms
for RDV graphs in three ways.   First, we show that the detour
into the T-intersection representation is not necessary: everything
can be done directly with a bottom-up traversal of the tree $\calT$
of the tree representation.

Second,
the runtimes are better: we achieve runtime $O(n)$ for all three problems,
compared to the previous bound of $O(n\log n)$
\cite{BG-SOFSEM26,BG-CGT26,PrashantThesis}.
For 
{\sc Matching}, our runtime even
beats the best previously known result for \emph{interval graphs},
which was $O(n\log\log n)$ 
(Moitra and Johnson~\cite{MoitraJ89} with an improvement
pointed out by  Rhee and
Liang \cite[Section~5]{RheeLiang}).

Finally, and this is what
we view as our most important contribution, 
 our algorithms are very simple.
There is no need for
complicated data structures; all the required information can be
obtained by storing at each node of $\calT$ a few integers or lists
or vertex references.   (For {\sc Matching} we also use a 
\Union-\Find data structure, but this is easy to implement 
if we relax the runtime bound
from $O(n)$ to $O(n\alpha(n))$).

\paragraph{Organization of the paper.}
Section~\ref{sec:background} introduces RDV representations and our
notation, and Section~\ref{sec:structure} describes the bottom-up
traversal of $\calT$ on which our algorithms are based.
Sections~\ref{sec:mis}, \ref{sec:mds3}, and~\ref{sec:match} give our
algorithms for {\sc IndependentSet}, {\sc DominatingSet}, and
{\sc Matching}, respectively, each with a small example, and
Section~\ref{sec:outlook} mentions open problems.
Appendix~\ref{sec:streamlined} contains streamlined pseudocode for all
three algorithms, and Appendix~\ref{sec:equiv} proves that the two
greedy algorithms for {\sc Matching} that we build on produce the same
matching, and that this matching is maximum.
%

\section{Background}
\label{sec:background}

We assume familiarity with graphs, and refer e.g.\ to Diestel's and Golumbic's books for basic terminology \cite{Die12,Gol80}.   
In this paper we study graphs $G$ that have a \emph{representation as a vertex-intersection graph of paths in a tree}. 
This consists of a \emph{host tree} $\calT$ and, for each vertex $z$ of $G$, a path $P(z)$ in $\calT$ such that $zy$ is an edge of $G$ if and only 
if $P(z)$ and $P(y)$ share at least one node of~$\calT$.
By convention, we use the term `node' and early letters $a,b,c,\dots$
of the alphabet
for elements of $V(\calT)$, to distinguish them from the vertices of $G$ for which we use $z,y,x,\dots$\iflong. \else\looseness-1 \fi

The class of graphs with such a representation is known as the \emph{path graphs}.   We study here a graph
class that is a proper subclass of the path graphs; it was named by Monma and Wei \cite{MonmaWei},
though studied even earlier e.g.~by Gavril \cite{Gavril75} and Booth
and Johnson \cite{BoothJ82}:\footnote
{They called RDV graphs \emph{directed path graphs}, but this later came to
mean a different graph class
, where the edges of $\calT$
can be oriented independently without regard to a common root.
}

\begin{definition} A \emph{rooted directed path graph} \textup(or \emph{RDV graph} \textup{\cite{MonmaWei})}
is a graph that has an \emph{RDV representation}, i.e., a
representation as vertex-intersection graph of
directed paths in a rooted tree $\calT$ in which
all edges are directed towards the root.
\end{definition}

\usetikzlibrary{backgrounds}
\begin{figure}[ht]
\centering
\begin{tikzpicture}[
        host/.style = {draw,rectangle,rounded corners,fill=white,inner sep=1pt},
        graph/.style = {draw,circle,fill=white},
        z1/.style = {draw=green!50!black,ultra thick,densely dotted},
        z2/.style = {draw=red,thick,dotted},
        z3/.style = {draw=gray,ultra thick,dash dot dot},
        z4/.style = {draw=orange,thick,dashed},
        z5/.style = {draw=blue!50!red,ultra thick,loosely dotted,cap=round},
        z6/.style = {draw=blue,thick,dash dot},
        z7/.style = {draw=green!50!blue,ultra thick,dash pattern={on 4pt off 1pt on 4pt off 1pt on 1pt off 1pt on 1pt off 1pt}},
        level distance = 1cm,
        sibling distance = 1cm,
]
\begin{scope}[scale=0.45]
\node [draw=none] at (-2.5,7.9) (G) {$G$:};
    \node[graph,z3] (z3) at (-1, 4.75) {\( z_3 \)};
    \node[graph,z1] (z1) at (4, 7.5) {\( z_1 \)};
    \node[graph,z2] (z2) at (0, 7.5) {\( z_2 \)};
    \node[graph,z4] (z4) at (2, 4.75) {\( z_4 \)};
    \node[graph,z5] (z5) at (0, 2) {\( z_5 \)};
    \node[graph,z6] (z6) at (2.5, 2) {\( z_6 \)};
    \node[graph,z7] (z7) at (5, 2) {\( z_7 \)};
    \draw (z3) -- (z5);
    \draw (z5) -- (z6);
    \draw (z6) -- (z7);
    \draw (z6) -- (z4);
    \draw (z4) -- (z1);
    \draw (z5) -- (z4);
    \draw (z2) -- (z4);
    \draw (z2) -- (z1);
\end{scope}
\begin{scope}[scale=1,yshift=4.6cm,xshift=5cm]
\node [draw=none] at (0,-1) (T) {$\calT$:};

\node [host] (r) at (1,-1) {$\,\overline{z_6}$, $\overline{z_7}\,$}
        child {edge from parent[draw=none,<-]}
        child {edge from parent[draw=none]}
        child {edge from parent[draw=none]}
        child {edge from parent[draw=none]}
        child { node [host] (c2) {$\,\overline{z_4},\overline{z_5},{z_6}$} [<-]
                child {edge from parent[draw=none]}
                child {edge from parent[draw=none]}
                child {edge from parent[draw=none]}
                child { node [host] (g3) {$\,\overline{z_1},\underline{\overline{z_2}},z_4$}
                        child {edge from parent[draw=none]}
                        child { node [host] (b3) {$\,\vphantom A\underline{z_1}\,$}}
                        child { node [host] (l4) {$\,\vphantom A\underline{z_4}$\,}}
                }
                child {edge from parent[draw=none]}
                child { node [host] (g2) {$\,\overline{z_3},z_5$}
                        child { node [host] (l2) {\vphantom A$\,\underline{z_3},\underline{z_5}\,$}}
                }
                child { node [host] (l5) {$\,\underline{z_6}\,\vphantom A$} }
        }
        child {edge from parent[draw=none]}
        child {edge from parent[draw=none]}
        child {edge from parent[draw=none]}
        child { node [host] (c1) {$z_7\vphantom A$}
                child { node [host] (l1) {$\,\underline{z_7}\,\vphantom A$} }
        }
;
\end{scope}

\begin{scope}[on background layer]
\draw [z3,<-] ([xshift=-5pt]g2.south)-- ([xshift=-5pt]l2.center); 
\draw [z1,<-] ([xshift=-5pt]g3.south)-- ([xshift=-5pt]b3.center); 
\draw [z2,<-] ([xshift=2pt]g3.north east) -- ([xshift=2pt]g3.south east); 
\draw [z4,<-] ([xshift=5pt]c2.south) -- ([xshift=5pt]g3.center)-- ([xshift=5pt]l4.center); 
\draw [z5,<-] ([xshift=-5pt]c2.south east) -- ([xshift=5pt]g2.center) -- ([xshift=5pt]l2.center); 
\draw [z6,<-] ([xshift=5pt]r.south) -- ([xshift=5pt]c2.center) -- ([xshift=5pt]l5.center); 
\draw [z7,<-] ([yshift=-0pt]r.east) -- ([xshift=5pt,yshift=2pt]c1.center) -- ([xshift=5pt]l1.center); 
\end{scope}
\end{tikzpicture}
\caption{
An RDV graph $G$ with an RDV representation~$\calT$.
The bags are written in the nodes of~$\calT$. (Picture 
taken from \cite{BG-SOFSEM26}.)
}
\label{fig:RDV_example}
\end{figure}

\begin{figure}[ht]
\centering
\begin{tikzpicture}[x=1.75cm,y=0.95cm,
        host/.style = {draw,rectangle,rounded corners,fill=white,inner sep=2pt},
        tree/.style = {->,>=latex,thin}]
\foreach \d in {0,...,6} \node[font=\scriptsize,gray] at (-0.2,-\d) {\d};
\node[font=\scriptsize,gray] at (-0.2,0.7) {depth};
\node[host] (r)  at (3,0)    {$\overline{z_6}$};
\node[host] (s)  at (3,-1)   {$z_6,\overline{z_7}$};
\node[host] (l7) at (1.6,-2) {$\underline{z_7}$};
\node[host] (p)  at (4,-2)   {$\overline{z_4},z_6$};
\node[host] (q)  at (4,-3)   {$z_4,\overline{z_5},z_6$};
\node[host] (x)  at (2.5,-4) {$\overline{z_1},z_4$};
\node[host] (y)  at (2.5,-5) {$z_1,\underline{\overline{z_2}},z_4$};
\node[host] (b1) at (1.9,-6) {$\underline{z_1}$};
\node[host] (b4) at (3.1,-6) {$\underline{z_4}$};
\node[host] (g)  at (4.4,-4) {$\overline{z_3},z_5$};
\node[host] (b3) at (4.4,-5) {$\underline{z_3},z_5$};
\node[host] (b5) at (4.4,-6) {$\underline{z_5}$};
\node[host] (b6) at (5.5,-4) {$\underline{z_6}$};
\foreach \c/\p in {s/r,l7/s,p/s,q/p,x/q,g/q,b6/q,y/x,b1/y,b4/y,b3/g,b5/b3}
  \draw[tree] (\c) -- (\p);
\foreach \v/\k in {l7/1,b1/2,b4/3,y/4,x/5,b5/6,b3/7,g/8,b6/9,p/11,r/13}
  \node[font=\scriptsize,text=blue!60!black,anchor=north west,inner sep=1pt] at (\v.south east) {\k};
\foreach \v/\k in {q/10,s/12}
  \node[font=\scriptsize,text=blue!60!black,anchor=west,inner sep=1pt] at ([yshift=-1pt]\v.east) {\k};
\end{tikzpicture}
\caption{The running example: an RDV representation of the graph $G$ of
  Figure~\ref{fig:RDV_example} that satisfies our assumptions.
  Each node shows its bag; top nodes are marked by overlining and bottom
  nodes by underlining the corresponding vertex.
  The top nodes are distinct, the bottom nodes are distinct, and the
  root is a top node.  The depths of the nodes are shown on the left.
  Since every node is a top node or a bottom node, we refer to the nodes
  as $t(z_i)$ and $b(z_i)$.
  The small numbers give the order in which the nodes are processed in
  the examples.}
\label{fig:running}
\end{figure}

Figure~\ref{fig:RDV_example} illustrates an RDV graph and its RDV representation.    RDV graphs can be recognized in polynomial time \cite{Gavril75}, and along the way, an RDV representation of size $O(n)$ is produced.
We use $t(z)$ and $b(z)$ for the \emph{top node} and \emph{bottom node} of the path of a vertex $z$, respectively.    
We may without loss of generality assume that $t(z)\neq t(y)$ for any two vertices $z,y$; if this
is violated (say node $\node$ is both $t(z)$ and $t(y)$) then we can subdivide the edge from $\node$ to
its parent (or add a parent if $\node$ is the root) and extend the path of $z$ upward into the new node
without affecting the represented graph.   Similarly we can require that $b(z)\neq b(y)$ for any two
vertices $z,y$, by adding a new child to $b(z)$ if required.
Finally we may assume that the root is a top node, for if it is not then no path uses it, and so we can simply contract one of its children with the root without affecting the represented graph.
These assumptions make the statements of the algorithms more
convenient.
Figure~\ref{fig:running} shows an RDV representation of the graph of
Figure~\ref{fig:RDV_example} that satisfies all of them; we will use it as
a running example.\footnote{
\iflong We could also stipulate that no node
  is both 
  a top node and a bottom node, but we
  do not need this. 
  \fi 
  A degenerate path consisting of a single node is
\iflong also \fi permitted.}

We assume that every vertex $z$ can look up
nodes $t(z)$ and $b(z)$ in constant time.
We also prepare 
$\calT$ such that each node $\node$ 
can do the following in constant time:
\begin{enumerate}
\item Look up $\depth(\node)$, which is the distance
of $\node$ from the root of~$\calT$.
\item Test whether $\node$ is the top node of some $z\in V(G)$, and if so, look up this (unique) vertex~$z$.
\item Test whether $\node$ is the bottom node of some $z\in V(G)$, and if so, look up this (unique) vertex~$z$.
\end{enumerate}

For any $\node\in V(\calT)$, we write $\beta(\node)$ for the set of vertices whose
paths contain 
$\node$, i.e., $\beta(\node)=\{z \in V(G): \node\in P(z)\}$.
We call $\beta(\node)$ the \emph{bag} of $\node$, a name
inspired by the fact that $\calT$ and the assignment of paths is a special case of a so-called tree
decomposition (see e.g.~\cite{Die12}) where the corresponding concept
is called a bag.
We will need a simple observation. 

\begin{observation}
\label{obs:inside}
Consider a vertex $z$ and its top node $\node=t(z)$.
Every neighbor $w$ of $z$ either is in $\beta(\node)$, or 
$t(w)$ is a
strict descendant of~$\node$.
\end{observation}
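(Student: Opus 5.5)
The plan is to argue directly from the structure of directed paths in the rooted tree $\calT$. Since all edges point towards the root, each path $P(w)$ is the set of nodes on the tree path from $b(w)$ up to its ancestor $t(w)$. Hence a node $c$ lies on $P(w)$ if and only if $c$ is an ancestor of $b(w)$ (or equal to it) and a descendant of $t(w)$ (or equal to it). Let $w$ be a neighbor of $z$ and set $\node=t(z)$. By the definition of the representation, $P(w)$ and $P(z)$ share some node $c$. Because $c\in P(z)$, the node $c$ is a descendant of $\node$ or equal to it.

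I would then split into two cases according to the position of $t(w)$ relative to $c$. The node $t(w)$ is an ancestor of $c$ or equal to $c$. Ancestors of a fixed node form a chain, and $\node$ is also an ancestor of $c$ or equal to it. So $t(w)$ and $\node$ are comparable: one is an ancestor of the other, or they coincide.

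If $t(w)$ is a strict descendant of $\node$, the second alternative of the statement holds. Otherwise $t(w)$ equals $\node$ or is a strict ancestor of $\node$. In that case $\node$ lies on the tree path from $c$ up to $t(w)$, and that tree path is contained in $P(w)$, since $P(w)$ runs from $b(w)$, through $c$, up to $t(w)$. Therefore $\node\in P(w)$, that is, $w\in\beta(\node)$, which is the first alternative.

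There is no real obstacle here. The only point needing care is the observation that the segment of $P(w)$ between $c$ and $t(w)$ is the unique tree path between them, because paths are directed towards the root. This is what gives comparability of $t(w)$ and $\node$ and places $\node$ on $P(w)$.
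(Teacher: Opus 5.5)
Your proof is correct and follows essentially the same argument as the paper: both take a common node $c$ of $P(w)$ and $P(z)$, note that $c$ is a descendant of $a=t(z)$, and use that $P(w)$ is the upward tree path through $c$ to $t(w)$. The paper phrases this as a contrapositive: assuming $a\notin P(w)$ forces $t(w)$ to be strictly below $a$. You instead make the comparability of $t(w)$ and $a$ explicit and split into cases, which spells out in more detail the step the paper leaves implicit.
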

\begin{proof}
Assume that $w\not\in \beta(\node)$, so $P(w)$ does not contain~$\node$.
But $P(w)$ must share a node with the path $P(z)$, say node $b\neq a$.   This must be a descendant of $a$ since
$a$ is the top node of $P(z)$,    and it is a \emph{strict} descendant since $b\neq a$. So $P(w)$ contains 
strict descendants of $a$, but does not contain~$a$;
thus $t(w)$ must be a strict descendant of~$\node$.
\end{proof}

We will
use the common notation  $N[z]$ for the closed neighborhood of a
vertex $z
$,
consisting of $z$ together with its 
adjacent vertices.
For a set of vertices $W
$,
 $N[W]$ denotes the
 union of $N[z]$ over all $z\in W$.
We study the following three problems:
\begin{itemize}
\item {\sc IndependentSet}:  Given a graph $G$, find a maximum vertex set $\I$ that
	is \emph{independent}, i.e., no edge connects two vertices in~$I$.
\item {\sc DominatingSet}:  Given a graph $G$, find a minimum vertex set $\D$ 
  that is \emph{dominating}, i.e., 
  each vertex of $G$ that is not in $D$ has a neighbor in~$D$, or more formally,
   $N[\D]=V(G)$.
      \item {\sc Matching}:  Given a graph $G$, find a maximum edge set $\M$
        such that no two edges of $\M$ have a common endpoint.
\end{itemize}

We will not review the extensive history of these three problems, and what
is known in terms of algorithms and/or NP-hardness, except for graph classes
that are closely related to RDV graphs, where such a review will be given in
the relevant section below.

\section{The Common Structure of Our Algorithms}
\label{sec:structure}

Our algorithms for  {\sc IndependentSet} and {\sc DominatingSet}
 have a common basic structure: They perform a bottom-up traversal of the tree~$\calT$.
At each node, the same 4-step procedure is used.
Our algorithm for {\sc Matching} also processes the nodes of $\calT$
from the leaves to the root, but it deviates from this
scheme: it does not work with the bags, but with a \Union-\Find data
structure on the nodes of~$\calT$ (Section~\ref{sec:match}).
We illustrate this \iflong paradigm \fi with a skeleton that 
makes the bags $\beta(\node)$ available when needed.
\iflong \else \looseness-1 \fi

\paragraph{Processing node~$\node$.}  We assume that all
strict descendants of $\node$ have already been processed.
\begin{enumerate}
\item (Accumulate the data from the children.)  
  \begin{itemize}
  \item If $\node$ is a leaf, initialize an empty list $B(\node)$.
  \item  Otherwise, let $\B(\node)$ be the concatenation of lists $\B(c)$ over all children $c$ of~$\node$.  
    
  \end{itemize}
\item (Update if a path starts.)
If $\node$ is a bottom node, say $\node=b(z)$:   
  \begin{itemize}
	\item Add $z$ to $\B(\node)$.   
  \end{itemize}
\item (Make decisions.)\iflong\par\fi
  $\ldots$ [At this point, $B(\node)=\beta(\node)$.]
\item (Update if a path ends.)
If $\node$ is a top node, say $\node=t(z)$:   
  \begin{itemize}
	\item Remove $z$ from $\B(\node)$. 
  \end{itemize}
\end{enumerate}

\begin{lemma}\label{lem:beta}
For any node $\node$, the set $B(\node)$ is always a subset of $\beta(\node)$, and
  at the beginning of Step~3 when $\node$ is processed, $B(\node)=\beta(\node)$.
\end{lemma}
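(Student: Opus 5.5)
We prove both claims together by induction over the bottom-up processing order, i.e., on the height of $\node$ in $\calT$. The key characterization is: since every path $P(z)$ is directed towards the root and runs from $b(z)$ up to $t(z)$, we have $z\in\beta(\node)$ if and only if $b(z)$ is a descendant of $\node$ (possibly $\node$ itself) and $t(z)$ is an ancestor of $\node$ (possibly $\node$ itself). We show the stronger invariant that after $\node$ has been fully processed (after Step~4), $B(\node)$ equals the set $\beta^+(\node)$ of vertices $z$ with $b(z)$ a descendant of $\node$ and $t(z)$ a strict ancestor of $\node$. Equivalently, $\beta^+(\node)=\beta(\node)\cap\beta(\parent(\node))$ if $\node$ has a parent, and $\beta^+(\node)=\emptyset$ at the root.

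For the inductive step at $\node$, assume the invariant holds for all children $c$ of $\node$, which are processed before $\node$. After Step~1, $B(\node)=\bigcup_c \beta^+(c)$. This union is the set of vertices $z$ with $b(z)$ a strict descendant of $\node$ and $t(z)$ an ancestor of $\node$: if $b(z)$ lies below child $c$ and $t(z)$ is a strict ancestor of $c$, then $t(z)$ is an ancestor of $\node$, and conversely. The union is disjoint, because $b(z)$ lies below exactly one child; so the concatenation creates no duplicates. Step~2 adds exactly the unique vertex $z$ with $b(z)=\node$, if there is one. After Step~2, $B(\node)$ is therefore the set of $z$ with $b(z)$ a descendant of $\node$ and $t(z)$ an ancestor of $\node$, which is $\beta(\node)$ by the characterization. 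This proves the second claim at the start of Step~3, assuming, as the skeleton intends, that Step~3 does not alter $B(\node)$. Step~4 then removes the unique $z$ with $t(z)=\node$, which lies in $\beta(\node)$, and leaves exactly $\beta^+(\node)$. The base case, a leaf, is the same argument with an empty union.

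For the first claim ("always a subset"), we observe that $B(\node)$ is only ever modified while $\node$ is processed. Before Step~2 it is $\bigcup_c\beta^+(c)\subseteq\beta(\node)$. It equals $\beta(\node)$ during Step~3, and after Step~4 it is $\beta^+(\node)\subseteq\beta(\node)$. One subtlety is that concatenation may physically reuse the children's lists, so $B(c)$ is consumed into $B(\node)$. We handle this by saying that $B(c)$ is no longer referenced afterward, or, if it is still referenced, by noting that $\beta^+(c)\subseteq\beta(c)$ still holds at that point.

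The main obstacle is bookkeeping rather than mathematics. One has to state the correct post-processing invariant $\beta^+$, rather than $\beta$ itself, and check the disjointness and exact membership of the union over the children. This uses the uniqueness of top and bottom nodes and the fact that each path is a vertical segment between $b(z)$ and $t(z)$.
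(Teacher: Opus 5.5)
Your proof is correct and takes the same route as the paper. It is an induction over the leaf-to-root order in which the set handed up from each child after Step~4 consists exactly of the vertices whose paths contain the child and continue upward, followed by both inclusions for $\beta(a)$ after Steps~1 and~2; your invariant $\beta^+(c)$, the bag characterization via $b(z)$ and $t(z)$, and the disjointness remark make precise what the paper states informally (the case where $a$ is not a top node, in which Step~4 removes nothing and $\beta(a)=\beta^+(a)$ already, is trivial).
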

\begin{proof}
In fact, the manipulations of $B(\node)$ in Steps~1, 2, and~4 play
together to ensure that
the correct set $B(\node)$ is built inductively.
In Step~4, a vertex $z$ with $t(z)=\node$ is removed from $B(\node)$ in
preparation for building the set $B$ for the parent.
Thus in Step~1 the set at each child contains all those vertices whose paths contain the child and continue upward.    So all these paths continue into $\node$, hence the vertices are in $\beta(\node)$.  Clearly the vertex added in Step~2 (if any) is also in $\beta(\node)$. 
Vice versa, any vertex in $\beta(\node)$ corresponds to a path that contains~$\node$.   This path either had its bottom node at $\node$
(such vertices are added in Step~2), or it contains one of the children (and hence the vertex is added in Step~1).
\end{proof}

As it happens, the later algorithms will not actually 
{need}
the sets $B(\node)$, but they will compute subsets of $B(\node)$, and
we shall use $B(\node)$ for their correctness arguments. 

\paragraph{Leaf-to-root order.}
The only required condition on the order of processing is that a child
comes before its parent. Any order that is consistent with this can be
used in our algorithms, for example postorder or reverse level-order.

\paragraph{Optimality.}
Our three algorithms establish optimality in different ways.
For {\sc IndependentSet}, the algorithm computes, along with the
independent set, a clique cover of the same size, which certifies
optimality.
For {\sc DominatingSet}, we show that our algorithm computes the same
set as the greedy algorithm of Booth and Johnson~\cite{BoothJ82}, and
rely on their optimality proof.
For {\sc Matching}, we show that our algorithm computes the same
matching as a known greedy algorithm, whose optimality follows from
the general results of Dahlhaus and Karpinski~\cite{DahlhausKarpinski}
about strongly chordal graphs;
Appendix~\ref{sec:optimality} gives an independent self-contained
proof for our setting.

\section{Maximum Independent Set \texorpdfstring{$\I$}{I}}
\label{sec:mis}

It is well known that in a chordal graph, a maximum independent set can be found with a simple
linear-time greedy algorithm executed on a perfect elimination order (see Gavril's original paper \cite{Gavril72}
or Golumbic's monograph~\cite{Gol80} for details).  In an RDV graph, the
runtime can be reduced to $O(n\log n)$ by employing suitable data structures \cite[Thm. 5.2.2]{PrashantThesis}.
For interval graphs
where the interval endpoints are given in sorted order
(this corresponds to our assumption that the tree representation $\calT$ is given),
a maximum independent set $\I$ can be found
by a simple greedy algorithm in $O(n)$ time
\cite{Gupta-Lee-Leung82}.

In this section we give an algorithm that solves {\sc IndependentSet} in an RDV graph in $O(n)$ time.
Our algorithm will perform the same steps as the greedy algorithm, but rather
than showing this we will argue directly that it computes a maximum independent
set. In accordance with the paradigm in \Cref{sec:structure},
the algorithm performs a bottom-up traversal of $\calT$
and incrementally builds an independent set~$I$.

We maintain a
label $\state(z)
$ for every vertex $z$
with the following meaning:
\begin{description}
\item[(I1)] Every vertex $z$ in
  $N[I]$ has
  $\state(z)=\blocked$; all other vertices have $\state(z)=\available$.%
\end{description}
We also maintain 
a list $\LL(\node)$ for every 
node $\node$,
with the following invariant:
\begin{description}
\item[(I2)] $\LL(\node)=\{z\in \B(\node): \state(z)=\available\}$,
\end{description}
where $\B(\node)$ is the set introduced in \Cref{sec:structure}; recall
in particular that $\B(\node)=\beta(\node)$ during Step~3 when $\node$
is processed.
For reference, we include the manipulations of the sets $\B(\node)$
from the algorithm of \Cref{sec:structure}
as comments.

As an optimality certificate, we store for each $z\in I$ a
vertex set $\clique(z)$ with the following property (as the name suggests):
\begin{description}
    \item[(I3)] $\clique(z)$ forms a clique. 
\end{description}


\paragraph{Initialization.} Initialize an empty set $\I$ of vertices.
For every vertex $z \in V(G)$, initialize
$\state(z):= \available$.  
\paragraph{Processing node~$\node$.} We assume that all
strict descendants of $\node$  have already been processed.
\begin{enumerate}
\item (Accumulate the data from the children.)  
  \begin{itemize}
  \item  \textcolor{gray}{/\!/ Form $\B(\node)$ by concatenating the lists $\B(c)$ of the children $c$ of~$\node$.}
  \item  Form $\LL(\node)$ by concatenating the lists $\LL(c)$ of the children $c$ of~$\node$. 
  \end{itemize}
\item (Update if a path starts.)
If $\node$ is a bottom node, say $\node=b(z)$:
  \begin{itemize}
	\item  \textcolor{gray}{/\!/ Add $z$ to $\B(\node)$}\,
	\item Add $z$ to $\LL(\node)$ /\!/ $\state(z)=\available$
          at this point.

        \end{itemize}
\item (Add to solution if possible.)   If $\node$ is a top node, say $\node=t(z)$, and $\state(z)=\available$:
  \begin{itemize}
		\item Add $z$ to~$\I$.
\item /\!/ At this point, $\LL(\node)=\{z\in \beta(\node): \state(z)=\available\}$.
		\item For every vertex $w$ in $\LL(\node)$, set
                  $\state(w):=\blocked$. /\!/ This includes the vertex~$z$.
        \item                   Set $\clique(z):=\LL(\node)$, and
                  clear $\LL(\node)$ to be empty. 
  \end{itemize}
\item (Update if a path ends.)  If $\node$ is a top node, say $\node=t(z)$:
  \begin{itemize}
  \item  \textcolor{gray}{/\!/ Remove $z$ from $\B(\node)$.}\,
  /\!/ No update is needed for $\LL(\node)$.
  \end{itemize}
\end{enumerate}
After processing 
the root node, we return $\I$ as the maximum independent set.

\paragraph{Correctness.}
Invariant (I2). In Steps~1 and~2, the set $L(\node)$ is constructed and updated in accordance with  the
defining condition of $L(\node)$
that the state should be \available.
In fact,
in Step~2, $\state(z)$ must be \available, because the state
has not changed since initialization.
Hence the addition of $z$ to $L(\node)$ is correct.
In Step~4, it would be necessary 
to remove $z$ from $L(\node)$ in case
$\state(z)=\available$.
However, in this case,
 $L(\node)$ has already been emptied in Step~3.
Consequently, Step~4 is not necessary.

We make an easy but important observation.
\begin{description}
\item[(I4)]
All vertices $z$ for which
  $t(z)$ has been processed completely
  have $\state(z)=\blocked$.
\end{description}
Proof: When processing $t(z)$, the state of $z$ is either \available or \blocked.  If $z$ is \available, Step~3 adds it to $I$, which makes it \blocked; so either way $\state(z)=\blocked$ afterwards.

We now prove
the crucial property that the \state\ labels are maintained
correctly (Invariant~(I1)):
We mark vertices as \blocked if and only if they were in $L(\node)$
during Step~3  and we added the vertex $z\in L(\node)$ to~$I$.  Since
$L(\node)$ forms a clique, all these vertices are now in
$N[z]\subseteq N[I]$, and
therefore their classification as \blocked is correct.

We still have to show that \emph{every} vertex $w\in N[z]$ is \blocked
after Step~3 adds 
$z$ to~$I$:
\begin{itemize}
\item 
If $w\not\in \beta(\node)$, then by
Observation~\ref{obs:inside} its top node $t(w)$
is a strict descendant of $\node$ and has been processed before; by
(I4) this means that $w$ is already \blocked.
\item 
If $w\in \beta(\node)$ and is still \available when
processing $\node=t(z)$, then by (I2)  $w$ is in $L(\node)$, and we
mark it as \blocked.
\end{itemize}

We now turn to the sets
$\clique(z)$ (Invariant (I3)):  We set $\clique(z)=L(t(z))$ when processing
$\node=t(z)$. This is a clique since $L(\node) \subseteq B(\node)
\subseteq \beta(\node)$, and $\beta(\node)$ forms a clique. 

We make an additional observation, which will be important for the
optimality proof:
\begin{description}
\item[(I5)]
Every vertex $w$ with $\state(w)=\blocked$ is
      in $\clique(z)$ for a unique $z\in I$.
\end{description}
Proof: 
Inspection of Step~3 shows that a vertex $y$ is added
to some $\clique(z)$ exactly when its state changes from \available to
\blocked.
Since states never change back to \available, $y$ will never be added to another set $\clique(z')$.

\paragraph{Runtime.}
We represent the lists $L(\node)$  by linked lists, and 
form list $L(\node)$ in Step~1 by concatenation in constant time per child.   
The lists $L(c)$ of the children are no longer available afterwards, but are also no longer needed.
The transfer of the list $L(\node)$ to $\clique(z)$ in Step~3 can also
be carried out in constant time
since  $L(\node)$ is simultaneously emptied.
\iflong \else \looseness-1 \fi

\begin{theorem}
Let $G$ be an $n$-vertex RDV graph with a given RDV representation of size $O(n)$.
Then the algorithm computes a maximum independent set $I$ of $G$ together with a
clique cover of size $|I|$ 
in $O(n)$ time.
\end{theorem}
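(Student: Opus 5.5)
We assemble the theorem from the invariants (I1)--(I5) already established, in three parts: independence of $I$, the clique cover, and optimality plus running time.

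\emph{Independence of $I$.} A vertex $z$ is added to $I$ only when $\state(z)=\available$. By (I1), this means $z\notin N[I]$ for the current set $I$, so $z$ has no neighbor in $I$. Induction over the additions shows that $I$ remains independent throughout.

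\emph{Clique cover.} After the root is processed, every top node has been processed, so by (I4) every vertex of $G$ is \blocked. By (I5), every vertex then lies in $\clique(z)$ for a unique $z\in I$. Hence the sets $\clique(z)$, $z\in I$, partition $V(G)$, and each of them is a clique by (I3). This gives a clique cover of size exactly $|I|$. (Each $\clique(z)$ is nonempty, since it contains $z$ itself.)

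\emph{Optimality.} We use weak duality. Any independent set $I'$ meets each clique in at most one vertex. Since the $|I|$ cliques cover $V(G)$, we get $|I'|\le |I|$. So $I$ is maximum, and the clique cover certifies it.

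\emph{Running time.} Initialization takes $O(n)$ time. Each node of $\calT$ incurs constant work in Steps~2 and~4, using the constant-time lookups of top and bottom nodes. Step~1 costs constant time per child, so $O(|\calT|)$ in total.

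The only nonconstant work is the loop in Step~3 that sets $\state(w):=\blocked$ for all $w\in\LL(\node)$. Each vertex is marked \blocked only while it is \available, and states never revert. So this loop runs at most $n$ times over the whole algorithm. The transfer of $\LL(\node)$ to $\clique(z)$ is constant time.

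The total is therefore $O(n+|\calT|)=O(n)$, since the representation has size $O(n)$.

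The main subtle point is making sure the state-change loop is charged correctly, and that (I5) yields disjointness rather than mere covering. Both follow from the fact that states are monotone: once a vertex becomes \blocked, it never becomes \available again.
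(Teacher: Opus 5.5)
Your proof is correct and follows the paper's argument essentially step for step: independence from (I1), the clique cover from (I4), (I5) and (I3), optimality because any independent set meets each clique in at most one vertex, and a linear running time. Your amortized charge of the Step~3 blocking loop (each vertex switches from \available to \blocked at most once) is slightly more careful than the paper's brief ``constant time per node and per child'' remark, which glosses over that loop.
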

\begin{proof}
 We only add a vertex $z$ to $\I$ if $\state(z)=\available$, and by
 Invariant (I1) this means that $z$ is not adjacent to any
 vertex in~$I$. Hence $I$ is an independent set.

 By property (I4), all vertices are $\blocked$ at the end of the algorithm, and Invariant~(I5) in
 connection with Invariant~(I3) implies that 
 the family
 $\{\,\clique(z): z\in I\,\}$ forms a \emph{clique cover},
i.e., it consists
of vertex-disjoint cliques that together cover all vertices of~$G$.
By construction,
the clique cover has size~$|\I|$.

A clique cover is a certificate of optimality for~$I$:
No independent set can contain more than one 
vertex in a clique,
so no independent set can be larger than any clique cover.    So $\I$
is a maximum independent set.
\iflong
\Guenter{Noch ein schöner Extrasatz:}The existence of a clique cover of size $|I|$ is in
line with the fact that RDV graphs are perfect graphs.
\fi

The runtime is $O(n)$\iflong{} since we spend \else: we spend
\fi constant time per node and per child and $|\calT|\in O(n)$.%
\end{proof} 

A streamlined version of the algorithm (without the
clique cover) is shown in Algorithm~\ref{alg:MIS2} in
Appendix~\ref{sec:MIS-abstract}.

\paragraph{Example.}
Consider the running example of Figure~\ref{fig:running}, with the
nodes processed in the order
$b(z_7)$, $b(z_1)$, $b(z_4)$, $t(z_2)$, $t(z_1)$, $b(z_5)$, $b(z_3)$,
$t(z_3)$, $b(z_6)$, $t(z_5)$, $t(z_4)$, $t(z_7)$, $t(z_6)$.
The first top node to be processed is $t(z_2)$, where
$L(t(z_2))=\{z_1,z_2,z_4\}$; so $z_2$ is added to~$I$, all three
vertices become \blocked, and $\clique(z_2)=\{z_1,z_2,z_4\}$.
At $t(z_1)$, vertex $z_1$ is \blocked, and nothing happens.
At $t(z_3)$, we have $L(t(z_3))=\{z_3,z_5\}$, so $z_3$ is added to~$I$
with $\clique(z_3)=\{z_3,z_5\}$.
At $t(z_5)$ and $t(z_4)$, the vertices $z_5$ and $z_4$ are \blocked.
At $t(z_7)$, we have $L(t(z_7))=\{z_6,z_7\}$, so $z_7$ is added to~$I$
with $\clique(z_7)=\{z_6,z_7\}$; at the root $t(z_6)$, vertex $z_6$ is already \blocked.
The result is $I=\{z_2,z_3,z_7\}$, together with the clique cover
$\{z_1,z_2,z_4\}$, $\{z_3,z_5\}$, $\{z_6,z_7\}$ of the same size.

\section{Minimum Dominating Set \texorpdfstring{$\D$}{D}
}
\label{sec:mds3}

Booth and Johnson \cite{BoothJ82} appear to have been the first to
study {\sc DominatingSet} in RDV graphs. They gave an algorithm
(repeated in Algorithm~\ref{alg:DS_BJ})
that is a bottom-up traversal of the tree $\calT$ of the representation.
They argue that it finds a minimum dominating
set, and that it can be implemented with linear (meaning $O(m+n)$) runtime.
Gokhale restated this algorithm as a greedy algorithm based
on a vertex ordering, which then works whenever this ordering
is a so-called strong perfect elimination order \cite{PrashantThesis}.
With this reframing, the algorithm can then be implemented with $O(n\log n)$
runtime in RDV graphs \cite{BG-SOFSEM26}.   
Cheng, Kaminski and Zaks \cite{ChengKZ98} claimed 
an $O(n)$ time algorithm for {\sc DominatingSet} in a subclass of
path graphs unrelated to RDV graphs.   For this, they used as a
subroutine an $O(n)$ time algorithm for {\sc DominatingSet} in interval
graphs.  Unfortunately, none of the references that they give seems to contain such
an algorithm; but $O(n)$ runtime for
interval graphs can also be achieved by modifying the techniques for RDV graphs slightly
\cite{BG-SOFSEM26}.


\begin{algorithm}[ht]
  \caption{Greedy algorithm for {\sc DominatingSet} from \cite{BoothJ82}}
\label{alg:DS_BJ}
\Input{RDV graph $G$ with RDV representation $(\calT,t,b)$}
initialize $\D:=\emptyset$, mark all vertices as \texttt{undominated}\;
\For{each node $\node$ of $\calT$, in leaf-to-root order}
{\If{$t(z)=\node$ for some vertex $z$ and $z$ is $\mathtt{undominated}$ 
}{
	let $y\in \beta(\node)$ be the vertex with highest top node, i.e., minimizing $\depth(t(y))$\;
	add $y$ to $\D$ and mark all its neighbors as \texttt{dominated} 
}
}
\Output{$\D$}
\end{algorithm}

The algorithm that we give below for {\sc DominatingSet} in RDV graphs is the same in spirit as Algorithm~\ref{alg:DS_BJ},
but we pre-compute some of the information so that the
runtime is reduced from the (obvious) $O(m+n)$ to $O(n)$.
Like Algorithm~\ref{alg:DS_BJ},
the algorithm performs a bottom-up traversal of $\calT$
and incrementally builds a dominating set.
To facilitate the correctness argument,
we distribute this dominating set among the nodes, 
and associate with every node a ``local'' dominating 
set $D(\node)$; eventually these will all get merged into one set that we return.
Additionally the algorithm
keeps track of three vertices $\highest,\dom,\undom$, and three
integers $\h,\d,\u$ at each  node $\node$ of~$\calT$.   The intended meaning is that the following hold
at the entry to Step~3 of processing~$\node$:
\begin{description}
    \item[(D1)] $\highest(\node)$ is the vertex in bag $\beta(\node)$ whose top node is {h}ighest, i.e., minimizes the depth of the top node.
    \item[(D2)] $\dom(\node)$ is a {d}ominating vertex in the bag (i.e., $\dom(\node)\in \beta(\node)\cap \D(\node)$), and among those, the one whose top node is highest. 
    \item[(D3)] $\undom(\node)$ is an {u}ndominated vertex in the bag (i.e., $\undom(\node)\in \beta(\node)\setminus N[\D(\node)]$), and among those, the one whose top node is lowest. 

      
\item[(D4)]
$h(\node)$, $d(\node)$, and $u(\node)$ are the depths of the top nodes of $\highest(\node),\dom(\node)$ and $\undom(\node)$, respectively.
\end{description}
If any of these vertices does not exist, we set
$\highest(\node)$, $\dom(\node)$, or $\undom(\node)$ to $\NIL$,
and accordingly set
$\h(\node)=\infty$, $\d(\node)=\infty$, or $\u(\node)=-\infty$.%
\footnote{We write $\pm\infty$ for convenience; it suffices to use values outside the
  range of depths of nodes of~$\calT$.}


\paragraph{Processing node~$\node$.}  We assume that all
strict descendants of $\node$  have already been processed.
\begin{enumerate}
\item (Accumulate the data from the children.)  
If $\node$ is a leaf: 
\begin{itemize}
\item Initialize $D(\node)$ as an empty set.
\item Set  $\highest(\node):=\dom(\node):=\undom(\node):={\NIL}$, $\h(\node):=\d(\node):=\infty$, and $\u(\node):=-\infty$.
\end{itemize}
Otherwise ($\node$ has children):
  \begin{itemize}
  \item[h.]  Let $c^*:=\arg\!\min\{\h(c): \text{$c$ is a child of
      $\node$}\}$.   Set $\highest(\node):=\highest(c^*)$ and  $\h(\node):=\h(c^*)$. 
  \item[D.] Form $D(\node)$ by concatenating the lists $\D(c)$ of the children $c$ of~$a$.
  \item[d.]  Let $c^*:=\arg\!\min\{\d(c): \text{$c$ is a child of
      $\node$}\}$.   Set $\dom(\node):=\dom(c^*)$ and $\d(\node):=\d(c^*)$.
  \item[u.]
    If $\d(\node)=\infty$,
    \\\null\qquad
  set $c^*:=\arg\!\max\{\u(c): \text{$c$ is a child of $\node$}\}$,
  $\undom(\node):=\undom(c^*)$, 
  $\u(\node):=\u(c^*)$.\\
 Otherwise set $\undom(\node):=\NIL,
 \u(\node):=-\infty$.  
  \end{itemize}
\item (Update if a path starts.)
If $\node$ is a bottom node, say $\node=b(z)$:   
  \begin{itemize}
	\item[h.] If $\depth(t(z))<\h(\node)$ update $\highest(\node):=z$, $h(\node):=\depth(t(z))$.
    \item[D.] /\!/ No update is necessary for $D(\node)$. 
	\item[d.] /\!/ No update is necessary for $\dom(\node)$ and $\d(\node)$.
	\item[u.] If $\d(\node)=\infty$ and $\depth(t(z))>\u(\node)$ update $\undom(\node):=z$,  $\u(\node):=\depth(t(z))$.
  \end{itemize}
\item (Add to output if needed.)   If $\u(\node)=\depth(\node)$:

(In this case,
  $z:=\undom(\node)$ has $t(z)=\node$ and is not dominated by $D(\node)$.)
  \begin{itemize}
	\item[h.] /\!/ No update is necessary for $\highest(\node)$ and $\h(\node)$.
	\item[D.] Add $\highest(\node)$ to $\D(\node)$.
	\item[d.] Update $\dom(\node):=\highest(\node), \d(\node):=\h(\node)$.
	\item[u.] Update $\undom(\node):=\NIL, \u(\node):=-\infty$.
  \end{itemize}
\item (Update if a path ends.)
If $\node$ is a top node:
  \begin{itemize}
	\item[h.] If $\h(\node)=\depth(\node)$ then set $\highest(\node):=\NIL, \h(\node):=\infty$.
    \item[D.] /\!/ No update is necessary for $D(\node)$.
	\item[d.] If $\d(\node)=\depth(\node)$ then set $\dom(\node):=\NIL,\d(\node):=\infty$.
	\item[u.] /\!/ No update is necessary for $\undom(\node)$ and $\u(\node)$.
  \end{itemize}
\end{enumerate}
After processing  the root node~$\node$, we return $\D(\node)$ as the minimum dominating set.

\medskip
Clearly the time to process a node of $\calT$ is proportional to its number of children, 
so the runtime of this algorithm is $O(|\calT|+n)=O(n)$.   

\paragraph{Correctness.}
The proof that the invariants are maintained is straightforward.
Directly from the algorithm it is clear that Invariant~(D4) holds, i.e., 
$h(\node)$, $d(\node)$, and $u(\node)$ are the depths of the top nodes of $\highest(\node),\dom(\node)$ and $\undom(\node)$, respectively.
To prove the other invariants, we use again the set $\B(\node)$
defined in Section~\ref{sec:structure}, which
equals $\beta(\node)$ during Step~3
(Lemma~\ref{lem:beta}).

Invariant (D1):   It suffices to show that $\highest(\node)=\arg\!\min\{\depth(t(z)): z\in \B(\node)\}$ at all times.
We add to $\B(\node)$ during Steps~1 and~2, by either combining the sets of the children or adding
a vertex whose bottom node is~$\node$.   During both steps we update $\highest(\node)$, if needed, to take into
account the vertices that were newly added to $\B(\node)$.  Step~3 does not change the bag.
During Step~4, if we had \emph{any} vertex $y$ left in $\B(\node)$ after removing the one whose path ends, then $t(y)$ would be above $\node$ since
top nodes are distinct.  So $\h(\node)=\depth(\node)$ implies that $\B(\node)$ is now empty, justifying that
we set $\highest(\node)=\NIL$.%
\iflong\footnote{In particular this implies that no path in the subtree rooted at $\node$ intersects a
path in the rest of~$\calT$.    Thus in a connected graph this update step is used only when we have
processed the entire graph, and could be omitted.}\fi

Invariant (D2):   It suffices to show that $\dom(\node)=\arg\!\min\{\depth(t(z)): z\in \B(\node)\cap \D(\node)\}$ at all times.
We add to $\B(\node)$ during Steps~1 and~2.   During Step~1 we take all added vertices into account.   During
Step~2, the newly added vertex $z$ is not in $\D(\node)$, so no update is needed.   During Step~3, since some vertex in $\beta(\node)$
was undominated, none of the vertices in $\beta(\node)$ were in $\D(\node)$ previously, so the newly added $\highest(\node)$ is
the \emph{only} candidate for $\dom(\node)$.   During Step~4, as for (D1) one argues that $\d(\node)=\depth(\node)$ implies
that $\B(\node)\cap \D(\node)$ is now empty, justifying that we set $\dom(\node)=\NIL$.

Invariant (D3):  
During Steps~1 and~2, if $\d(\node)<\infty$ then we must have at least one dominating vertex in $\B(\node)$.
This dominates \emph{all} vertices in $\B(\node)$ and is in $\D(\node)$, so $\undom(\node)$ is undefined.
The same argument applies to Step~3, where $\d(\node)<\infty$ holds always.   We claim that during Step~4, no update is needed to $\undom(\node)$.
To see this, assume that $\node=t(z)$ is a top node
 and $z=\undom(\node)$,
so that the removal of $z$ from $B(\node)$ would necessitate finding a new
vertex $\undom(\node)$.
 This means that $\u(\node)=\depth(t(z))=\depth(\node)$.
  But then we would have acted in Step~3, ensuring that $z$
 becomes dominated and updating  $\undom(\node)$ accordingly.

Finally, we prove that our algorithm computes the same
dominating set as Algorithm~\ref{alg:DS_BJ}.
More specifically, we claim that we add
a vertex 
to $D(a)$ in Step~3 if
and only if  Algorithm~\ref{alg:DS_BJ} adds
a vertex 
to~$D$.
(In this case, both algorithms obviously add the same vertex, $\highest(\node)$.)


We shall show that the following four statements are equivalent:
\begin{enumerate}
\item Our algorithm adds a vertex to $D(\node)$ in Step~3.
\item The vertex $z$ with $t(z)=\node$
  is not dominated by $D(\node)$.
\item The vertex $z$ with $t(z)=\node$
  is not dominated by the global set~$D$.
\item 
  Algorithm~\ref{alg:DS_BJ} adds a vertex to $D$ when processing~$\node$.
\end{enumerate}
The equivalence between 3 and 4 follows directly from the
description of
Algorithm~\ref{alg:DS_BJ}.
However,
the equivalence between 1 and 2 is not immediately
apparent from the test in Step~3. It can be seen as follows:

A vertex is added to $\D(\node)$ when we are processing $\node$ and $\u(\node)=\depth(\node)$.   Since
$\u(\node)>-\infty$, the vertex $z:=\undom(\node)\neq \NIL$ exists, and since
$\u(\node)=\depth(\node)$ we have $t(z)=\node$ by~(D4).
By (D3) we conclude that $z$ is undominated by $\D(\node)$.
Conversely, if
$z$ with $t(z)=\node$ is undominated by $\D(\node)$, it follows
immediately from (D3) and (D4) that $\undom(\node)=z$ and
$\u(\node)=\depth(t(z))=\depth(\node)$,
and hence the test in Step~3 will be positive.

For this, we need another observation:
\begin{description}
    \item[(D5)] For every vertex $z\in D(\node)$, the path $P(z)$ contains nodes in the subtree rooted at~$\node$.
\end{description}
Inductively, this holds after concatenating
 $D(\node)$ from the sets of the children in Step~1.
 The only vertex added to $D(\node)$ is $\highest(\node)$ in Step~3, which is in $\beta(\node)$ by (D1).
 Its path therefore contains~$\node$.

To prove the equivalence between 2 and 3, we assume inductively
that $\bigcup_{c} \D(c)$ of our algorithm equals $\D$ for
Algorithm~\ref{alg:DS_BJ}. 
What remains to prove is that
if $z$ is undominated by $\D(\node)$,
it is also undominated by the global set~$\D$.
If $z$ were dominated by, say, $y\in D$, then $y\in D(b)$ for some
node $b$ that is neither an ancestor nor a descendant of~$a$. 
By~(D5), $P(y)$ contains nodes in the subtree rooted at~$b$.
But no such vertex $y$ can be adjacent  to $z$, where $\node=t(z)$,
due to Observation~\ref{obs:inside}. This contradiction proves our claim.

So the two algorithms compute the same dominating set, and since
Algorithm~\ref{alg:DS_BJ} finds a minimum dominating set \cite{BoothJ82}, so does ours.


\begin{theorem}
    Let $G$ be an $n$-vertex RDV graph with a given RDV representation of size $O(n)$.   Then we can find a minimum dominating set of $G$ in $O(n)$ time.
    \qed
\end{theorem}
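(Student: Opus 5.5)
The theorem follows by putting together what was established just before it: the running-time analysis of the node-processing procedure, and the correctness argument that the procedure reproduces Algorithm~\ref{alg:DS_BJ}. For the runtime, Step~1 at a node $\node$ takes time proportional to the number of children of $\node$. This covers the three $\arg\min$/$\arg\max$ computations and the list concatenation of the sets $D(c)$, done on linked lists in constant time per child. Steps~2--4 take constant time each, because every node can look up $\depth$, and its unique top-node and bottom-node vertex, in constant time. Summed over the tree this gives $O(|\calT|+n)$, and $|\calT|\in O(n)$ by assumption.

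For correctness, I would first establish Invariants (D1)--(D4) at the entry to Step~3 by induction along the leaf-to-root order. The tool is Lemma~\ref{lem:beta}: it identifies $\B(\node)$ with $\beta(\node)$ at that moment, so each invariant can be checked as a statement about $\B(\node)$ under the updates of Steps~1, 2 and~4. The delicate points are Step~4, where we need distinctness of top nodes so that $\h(\node)=\depth(\node)$ or $\d(\node)=\depth(\node)$ forces the relevant set to become empty, and the argument that $\undom$ never needs updating in Step~4 because Step~3 would already have acted.

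Next I would prove the chain of equivalences 1$\Leftrightarrow$2$\Leftrightarrow$3$\Leftrightarrow$4. Here 1$\Leftrightarrow$2 comes from (D3)/(D4) and the test $\u(\node)=\depth(\node)$. For 2$\Leftrightarrow$3, assume inductively that the union of the local sets equals the global $D$, and use (D5) together with Observation~\ref{obs:inside}: a dominator stored at an incomparable subtree cannot be adjacent to the vertex with top node $\node$. Finally, 3$\Leftrightarrow$4 holds by definition. Since both algorithms then add the same vertex $\highest(\node)$ at the same nodes, the output $D(\text{root})$ coincides with Booth and Johnson's set, which is minimum by \cite{BoothJ82}.

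The main obstacle is the step 2$\Leftrightarrow$3. The local set $D(\node)$ ignores dominators chosen in other branches, so I need to argue carefully that such dominators are irrelevant to $z$. That is the (D5)-plus-Observation~\ref{obs:inside} argument: any $y\in D(b)$ with $b$ incomparable to $\node$ has its path meeting the subtree of $b$. A neighbour $y$ of $z$ not in $\beta(\node)$ would have to have $t(y)$ strictly below $\node$, which is impossible for a path meeting the subtree of $b$. If instead $y\in\beta(\node)$, its path passes through $\node$ and, being a directed path, contains only ancestors of its bottom node, so it cannot also meet the subtree of $b$.
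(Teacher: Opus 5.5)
Your proposal is correct and matches the paper's proof essentially step for step. The shared structure is the per-node cost analysis giving $O(|\calT|+n)$, the invariants (D1)--(D4) via Lemma~\ref{lem:beta}, and the chain of equivalences 1$\Leftrightarrow$2$\Leftrightarrow$3$\Leftrightarrow$4 showing that the algorithm reproduces Algorithm~\ref{alg:DS_BJ}, whose output is minimum by \cite{BoothJ82}; your handling of 2$\Leftrightarrow$3 only spells out a little more explicitly than the paper how (D5) and Observation~\ref{obs:inside} rule out adjacency to a dominator stored at an incomparable node.
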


We note that our algorithm can be slightly simplified:
the vertex references $\dom$ and $\undom$ are not 
needed as long as we store the depths of their top nodes, i.e., $d$ and~$u$.   See also Appendix~\ref{app:shorterAlgoDS}.

\paragraph{Example.}
Table~\ref{tab:DS-example} shows the run of the algorithm on the
running example of Figure~\ref{fig:running}, with the nodes processed in the same
order as in the example for {\sc IndependentSet}.
At $t(z_2)$, the bag contains the undominated vertex $z_2$ itself, so
$u(t(z_2))=5=\depth(t(z_2))$, and the vertex $\highest=z_4$, whose top
node is highest in the bag, is added to~$D$.
Similarly, $z_5$ is added at $t(z_3)$.
At $t(z_4)$, the dominating vertex $z_4$ leaves the bag, so
Step~4 resets $\dom$ to $\NIL$ and $d$ to $\infty$.
Hence, at $t(z_7)$, the vertex $z_7$ is undominated, and $z_6$ is added.
The result is $D=\{z_4,z_5,z_6\}$.
It is a minimum dominating set: the closed neighborhoods
$N[z_2]=\{z_1,z_2,z_4\}$, $N[z_3]=\{z_3,z_5\}$, and $N[z_7]=\{z_6,z_7\}$
are pairwise disjoint, so every dominating set has at least three vertices.

\begin{table}[ht]
\centering
\renewcommand{\arraystretch}{1.1}
\begin{tabular}{@{}llclllll@{}}
\hline
iteration&
node $\node$ & $\depth(\node)$ & bag $\beta(\node)$ & $h(\node)$ & $d(\node)$ & $u(\node)$ & added to $D$\\
\hline
1&$b(z_7)$ & 2 & $z_7$ & $1$ ($z_7$) & $\infty$ & $1$ ($z_7$) & \\
2&$b(z_1)$ & 6 & $z_1$ & $4$ ($z_1$) & $\infty$ & $4$ ($z_1$) & \\
3&$b(z_4)$ & 6 & $z_4$ & $2$ ($z_4$) & $\infty$ & $2$ ($z_4$) & \\
4&$t(z_2)$ & 5 & $z_1,z_2,z_4$ & $2$ ($z_4$) & $\infty$ & $5$ ($z_2$) & $z_4$\\
5&$t(z_1)$ & 4 & $z_1,z_4$ & $2$ ($z_4$) & $2$ ($z_4$) & $-\infty$ & \\
6&$b(z_5)$ & 6 & $z_5$ & $3$ ($z_5$) & $\infty$ & $3$ ($z_5$) & \\
7&$b(z_3)$ & 5 & $z_3,z_5$ & $3$ ($z_5$) & $\infty$ & $4$ ($z_3$) & \\
8&$t(z_3)$ & 4 & $z_3,z_5$ & $3$ ($z_5$) & $\infty$ & $4$ ($z_3$) & $z_5$\\
9&$b(z_6)$ & 4 & $z_6$ & $0$ ($z_6$) & $\infty$ & $0$ ($z_6$) & \\
10&$t(z_5)$ & 3 & $z_4,z_5,z_6$ & $0$ ($z_6$) & $2$ ($z_4$) & $-\infty$ & \\
11&$t(z_4)$ & 2 & $z_4,z_6$ & $0$ ($z_6$) & $2$ ($z_4$) & $-\infty$ & \\
12&$t(z_7)$ & 1 & $z_6,z_7$ & $0$ ($z_6$) & $\infty$ & $1$ ($z_7$) & $z_6$\\
13&$t(z_6)$ & 0 & $z_6$ & $0$ ($z_6$) & $0$ ($z_6$) & $-\infty$ & \\
\hline
\end{tabular}
\caption{The algorithm for {\sc DominatingSet} on the running example of
  Figure~\ref{fig:running}. The values $h$, $d$, and $u$ are those at the
  beginning of Step~3; the corresponding vertices $\highest$, $\dom$, and
  $\undom$ are given in parentheses.}
\label{tab:DS-example}
\end{table}

\section{Matching}
\label{sec:match}

There is a well-known greedy algorithm for {\sc Matching} (repeated
below as Algorithm~\ref{alg:Match_greedy1}).   
It uses the notion of a \emph{free} vertex, 
which is any vertex that
is not incident to any edge in the (current
) matching.
In an RDV graph, the greedy algorithm computes a maximum matching. (In fact, it does so in any strongly chordal graph if one uses a strong
perfect elimination order as vertex ordering \cite{DahlhausKarpinski}.)   In an interval graph, the runtime can be improved
to $O(n\log n)$ by reformulating the algorithm a bit \cite{MoitraJ89}. By replacing `binary search tree' in this reformulation with `van Emde Boas tree',
the runtime can be reduced further to $O(n\log\log n)$ \cite{RheeLiang}.  A runtime of $O(n\log\log n)$ can also be achieved (with
a different approach) for circular arc graphs  \cite{LiangRhee} and permutation graphs \cite{RheeLiang}.
For RDV graphs, the greedy algorithm can be implemented with $O(n\log n)$ runtime \cite{BG-CGT26}.   

\begin{algorithm}[ht]
\caption{Classic greedy algorithm for {\sc Matching} (see e.g.~\cite{DahlhausKarpinski})%
}
\label{alg:Match_greedy1}
\Input{RDV graph $G$ with RDV representation $(\calT,t,b)$}
let $z_1,\dots,z_n$ be any
leaf-to-root ordering
of $V(G)$\;
initialize the matching $\M:=\emptyset$\;
\For{$i=1,\dots,n$}{ \If{$z_i$ is free and has free  neighbors}{
		among all free neighbors of $z_i$, let $z_q$ be the
                one that minimizes $q$ /\!/ ($q>i$) \;
		add $z_iz_q$ to $\M$\;
	}
}
\Output{$\M$}
\end{algorithm}

In this section, we give a faster algorithm for {\sc Matching} in RDV graphs.  
At first glance, one might be tempted to do this by ``translating'' Algorithm~\ref{alg:Match_greedy1} 
into a traversal in~$\calT$:
Traverse the
nodes in leaf-to-root order, act on $z_i$ when we reach $t(z_i)$, and search for $z_q$ among
the free vertices of $\beta(t(z_i))$.    However, searching through this entire set would take too long.   Pre-computation of $z_q$ (similar to what we did with $\undom$) does not work either, because once $z_q$ is matched, we would need to update this reference (for passing to the parent), 
and for this would again have to inspect all free vertices of $\beta(t(z_i))$.%
\iflong
\footnote{This was not a problem
  in our algorithm for {\sc DominatingSet},
  because the vertex added to $D$ dominates
every vertex
in the bag and hence the updated $\undom$ is simply $\NIL$.}%
\fi{}
With a priority queue this could be done in $O(\log n)$ time per node
(details omitted), 
but we want to be faster than that.

We therefore use an entirely different approach based on a reformulation of
the greedy algorithm that delays adding a matching edge until we have seen
the second endpoint of it.
See Algorithm~\ref{alg:Match_greedy2-prime} for this reformulation,
which was given for an interval representation by Liang and Rhee
\cite{LiangRhee} and stated using a vertex ordering by Biedl and
Gokhale \cite{BG-CCCG24}.   Neither reference formally proved  that the two algorithms produce the same matching; for completeness we give such
a proof in  \autoref{sec:equiv}, together with
a self-contained optimality proof. 

\begin{algorithm}[ht]
  \caption{The delayed greedy algorithm for {\sc Matching} \cite{LiangRhee}}
\label{alg:Match_greedy2-prime}
\Input{RDV graph $G$ with RDV representation $(\calT,t,b)$}
let $z_1,\dots,z_n$ be any
leaf-to-root ordering
of $V(G)$\;
initialize the matching $\M:=\emptyset$\;
\For{$q=1,\dots,n$}{
	\If{$z_q$ has free neighbors $z_i$ with $i<q$}
{
		among such neighbors of $z_q$, let $z_i$ be the
                one that minimizes $i$ /\!/ ($i<q$) \;
		add $z_iz_q$ to the matching $\M$
	}
}
\Output{$\M$}
\end{algorithm}

The crucial operation
for carrying out Algorithm~\ref{alg:Match_greedy2-prime}
is to find,
among the free neighbors of $z_q$, 
the neighbor $z_i
$ with the smallest $i$,
i.e., with the top node $t(z_i)$ as low as possible,
or to detect that there are no such neighbors with $i<q$.
In terms of the tree $\calT$, this operation can be interpreted as
follows: We start from $b(z_q)$ and walk upwards until we hit the first
node that is a top node of a free vertex.
We denote this operation of ``searching
for the first  top node of a free vertex, starting
from a node $c$ and going upward'' by $\search(c)$.

If this walk reaches $t(z_q)$ without finding such a node, then $z_q$ has no free neighbor $z_i$ with $i<q$ since the top node of such a $z_i$ would have to be a strict descendant of $t(z_q)$ by Observation~\ref{obs:inside}, but an ancestor of $b(z_q)$ since $z_iz_q$ is an edge.
Vice versa, if $z_q$ has no free neighbor $z_i$ with $i<q$, then $\search(b(z_q))$ will stop at $t(z_q)$ itself,
because $z_q$ is free and thus $t(z_q)$ is a candidate node for stopping.   
On the other hand, if the walk stops at a top node $t(z_i)$ of some free vertex~$z_i$ with
$i<q$, then this must be the desired matching partner according
to Algorithm~\ref{alg:Match_greedy2-prime}, since the
top nodes $t(z_i)$ along the path from
$b(z_q)$ to $t(z_q)$ appear in increasing order of~$i$.
In summary, Algorithm~\ref{alg:Match_greedy2-prime} is 
equivalent to Algorithm~\ref{alg:match_new_search} given below.

\begin{algorithm}[ht]
  \caption{The delayed greedy
    algorithm for {\sc Matching} in terms of the
    \search-operation}
  \label{alg:match_new_search}
\Input{RDV graph $G$ with RDV representation $(\calT,t,b)$}
initialize the matching $\M:=\emptyset$\;
\For{each node $a\in V(\calT)$, in leaf-to-root order}{
  \If{$a=t(z)$ for some vertex $z$}{
    compute $c:=\search(b(z))$\;
    \If {$c\neq a$}{
        let $y$ be the vertex with $t(y)=c$\;
        add edge $yz$ to $\M$\;
      }
    }
}
\Output{$\M$}
\end{algorithm}

So we now need
a data structure that facilitates 
$\search(a)$.
An idea for this is illustrated in \autoref{fig:tree}.
Here the top nodes of free vertices are marked as thick
    bullets.
  Severing the edges between these nodes and their parent nodes
  (dashed edges) splits the tree into connected components.
  For any node $\node$ in a given component, 
  $\search(\node)$ yields the same result,
  namely the bulleted node at the root of the component.
  The task is thus reduced to identifying the component in which
  node~$\node$ lies.

\begin{figure}[htb]
  \centering
  \includegraphics{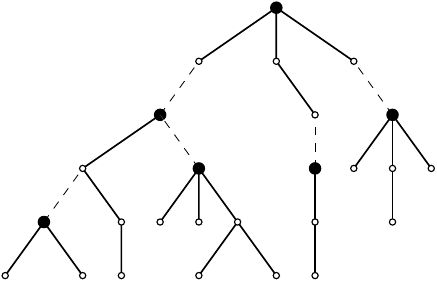}
  \caption{A tree $\calT$ 
    in which the top nodes of free vertices are marked as thick
    bullets.
}
  \label{fig:tree}
\end{figure}

  Every addition of an edge to $M$ 
means that two vertices are no longer free, so
  two bullets disappear, and
  the corresponding components are 
  merged with their parents.
This calls for a \Union-\Find data structure.
Recall that this is a data structure that maintains a partition 
of a ground set, which, in our case, is $V(\calT)$.
Each set has a \emph{representative}, and the operation $\Find(\node)$ (for any node $\node$) retrieves
the representative of the set containing~$\node$.
Initially, the partition contains
for every node
$\node$ a singleton set $\{\node\}$
whose representative is~$\node$.
The operation $\Union(a\leftarrow c)$
forms the union of the set $A$ containing $a$ and the set $C$ containing~$c$;  the new common representative is the one of set~$A$.
We will only perform this when 
we know that
$\Find(a)\ne\Find(c)$.

We will use the \Union-\Find data structure 
to perform
operation $\search$ efficiently.
All our unions will be of
the form $\Union(\parent(\node)\leftarrow\node)$,
where $\parent(\node)$ denotes the parent of~$\node$.
Recall that we assumed that the root is a top node,
so $\parent(a)$ is well-defined when $a$ is not a top node.

We will maintain the following invariants.
\begin{description}
\item[(M1)]
As long as the root of $\calT$ has not yet been processed, 
the set 
of  nodes $\node$ for which
the operation
$\Union(\parent(\node)\leftarrow\node)$
has been performed
is 
$V(\calT)\setminus\{\,t(z):\text{$z\in V(G)$, 
  $z$ is free}\,\}$.
  
\item[(M2)] As long as the root of $\calT$ has not yet been processed, 
 $\Find(c)=\search(c)$ for all nodes~$c$.
\end{description}
%

\smallskip

Our algorithm 
stores no \iflong significant helper \fi information other than 
the \Union-\Find data structure
and is just a few lines long. 
(For reference, 
Algorithm~\ref{alg:match_UF} in Appendix~\ref{app:match-UF}
gives the complete pseudocode.)

\paragraph{Initialization.} Initialize $\M:=\emptyset$.\\
Initialize 
a
\Union-\Find data structure where every node of $\calT$
forms a singleton set.\\
For every node $\node$ of $\calT$
that is not a top node,
perform 
$\Union(\parent(\node)\leftarrow\node)$.    

\paragraph{Processing node~$\node$.}
We assume that all
strict descendants of $\node$  have already been processed.
\\
If $\node=t(z)$ for some~$z$:
\\
\null\qquad    Compute $c:=\Find(b(z))$
\\\null\qquad
If $c\ne\node$:
\\\null\qquad\qquad $\bullet$ %
  Let $y$ be the vertex with $t(y)=c$, and add edge $yz$ to~$\M$.   
\\\null\qquad\qquad $\bullet$ %
\Union$(\parent(c)\leftarrow c)$.
\\\null\qquad\qquad $\bullet$ %
If $a$ is not the root of $\calT$ then perform \Union$(\parent(a)\leftarrow a)$. \\
After processing 
the root node, we return $\M$ as the maximum matching.

\paragraph{Correctness.}
Invariant (M1) is obviously satisfied: Initially, all vertices are
free,
and the initialization step performs the unions for all  nodes that
are not top nodes. Whenever a vertex becomes matched, the union operations are
performed
for the corresponding top node.
The only exception is that we do not apply the
$\Union(\parent(a)\leftarrow a)$
when
$a$ is the root of $\calT$ 
(for obvious reasons). But this happens
only in the last iteration, when the
root 
is
 processed, an exception that is provided for in Invariant~(M1).

To derive Invariant (M2) from (M1), we use the following statement:

\begin{proposition}
\label  {lem:find-correct}
Let $\calT$ be a tree rooted at node $\rho$.
Let
$F\subseteq V(\calT)$ be 
a set of nodes that contains the root $\rho$,
and let $U :=    V(\calT)\setminus F$ be its complement.
Assume we have initialized a \Union-\Find data structure with each
$a\in V(\calT)$ as a singleton set. 
After performing
$\Union(\parent(\node)\leftarrow\node)$ for all nodes $\node\in
  U
 $
  in any order,
  $\Find(c)$
  will report for any node $c$ the first node in $F$
  on the path from
  $c$ to the root~$\rho$.
\end{proposition}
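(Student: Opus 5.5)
We characterize the final partition explicitly. For each $f\in F$, let $S(f)$ be the set of nodes $c$ whose first $F$-node on the path from $c$ to $\rho$ is $f$. Since $\rho\in F$, every $c$ has such a node. Hence the sets $S(f)$, $f\in F$, partition $V(\calT)$. The claim is that after all unions are performed, the sets of the \Union-\Find structure are exactly the $S(f)$, and the representative of $S(f)$ is $f$. This gives $\Find(c)=f$ for $c\in S(f)$, as stated. Since the order of the unions is arbitrary, we cannot simply argue along one fixed order. We instead prove an invariant that holds after any subset of the unions has been performed.

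The invariant is as follows. Let $U'\subseteq U$ be the set of nodes for which $\Union(\parent(\node)\leftarrow\node)$ has been executed so far. Consider the forest obtained from $\calT$ by keeping exactly the edges $\node\,\parent(\node)$ with $\node\in U'$. We claim that the sets of the \Union-\Find structure are the vertex sets of the components of this forest, and that the representative of each set is the root of its component, i.e., its highest node. Initially $U'=\emptyset$, every component is a singleton, and its representative is the node itself.

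For the inductive step, suppose we perform $\Union(\parent(\node)\leftarrow\node)$ with $\node\notin U'$. By the invariant, $\node$ is the highest node of its component $C$, because the edge from $\node$ to its parent is not present. The parent $p=\parent(\node)$ lies in a different component $A$, since in a forest the edge $\node p$ joins two distinct components. Thus $\Find(p)\ne\Find(\node)$, so this union is one the paper's convention allows. Adding the edge merges $A$ and $C$. The root of the merged component is the root of $A$, since every node of $C$ now reaches $A$ through $\node$. By the convention on \Union, the new representative is the representative of the set containing $p$, which is the root of $A$. So the invariant is maintained.

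When all unions have been performed we have $U'=U$. The kept edges are then exactly the parent edges of non-$F$ nodes, so the component roots are exactly the nodes of $F$ (together with $\rho$, which has no parent edge and lies in $F$). Starting from $c$ and walking up inside its component, we pass only through nodes of $U$ and stop at the first node of $F$. Hence the component containing $c$ is $S(f)$ for that first $F$-node $f$, and $\Find(c)=f$.

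The only delicate point is the inductive step. There we must check that the representative produced by \Union really is the component root, and that the two sets being merged are always distinct. Both follow from the fact that the kept edges form a sub-forest of $\calT$ and that each union attaches a component root to its parent.
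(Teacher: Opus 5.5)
Your proof is correct and follows the same route as the paper: an induction over the performed unions, with the invariant that the \Union-\Find sets are the components of the sub-forest of kept parent edges, each represented by its root. At the end these roots are exactly the nodes of $F$. Your version is somewhat more careful than the paper's, since it inducts over arbitrary subsets $U'\subseteq U$ to cover any order, and it explicitly checks that each union merges two distinct sets.
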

\begin{proof}
See Figure~\ref{fig:tree} for an illustration.
  The operation
  $\Union(\parent(\node)\leftarrow\node)$ unites the
  set containing $\node$ with the set containing its parent.
It follows by induction on $|U|$ that the sets are the connected components
  of the subgraph of $\calT$ consisting of the edges from the nodes of
  $U$ to their parents. Also (again by induction) each component
  forms a rooted tree, whose root is the representative of the set.
  By construction the set of roots is~$F$.

  Thus $\Find(c)$ can be described as walking upwards in the tree to
  the
  root of the component, which equals the first node in $F$ on the path to
  the root of the tree.
\end{proof}

Since $\search(c)$ is \iflong precisely \else also \fi the operation
of 
reporting
the first node in $F$
  on the path from
  $c$ to the root\iflong~$\rho$\fi,
  Invariant (M2) follows as long as the root is free.
This implies that our algorithm creates the same matching as
Algorithm~\ref{alg:match_new_search}.
  Invariant~(M2) does not apply after the root \iflong $a=\rho$ \fi is
 processed,
but then the algorithm is about to terminate\iflong\ anyway\fi, and there
  will be no \iflong further \else more \fi \Find operations.
\iflong \else \looseness-1 \fi

\paragraph{Runtime.}
Every
\Union operation is between a node and its parent in the tree~$\calT$.
This is
exactly the \emph{static tree set union} scenario,
for which there is
a tailor-made data structure \cite{GabowTarjan85}:
In this situation,
  on a random-access machine with unit cost measure
  and $O(\log n)$ word length,
  for a universe of size $N$,
a sequence of $N-1$ \Union operations  and
  $k$ \Find operations can be carried out
  in $O(k + N)$ time and $O(N)$ space.
  \iflong
    (The algorithm precomputes the answers
    to \Find operations on small sets, and then uses
    table look-up on
    small sets combined with a conventional
    \Union-\Find algorithm on a universe of size
    $o(N)$.)\Guenter{candidate for shortening}
  \fi
We perform at most two \Find operations for every node.   
With this, the
\Union-\Find operations take $O(|\calT|)$ time in total,
and the rest of the algorithm can be implemented with runtime $O(|\calT|+n)$ as well.

Since the greedy algorithm is known to find a maximum matching in an RDV graph \cite{DahlhausKarpinski},
we have our result for {\sc Matching};
see Appendix~\ref{sec:optimality} for a 
self-contained proof.

\begin{theorem}
    Let $G$ be an $n$-vertex RDV graph with a given RDV representation of size $O(n)$.   Then we can find a maximum matching of $G$ in $O(n)$ time.
    \qed
\end{theorem}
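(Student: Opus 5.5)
The theorem follows by chaining results already established in this section. The plan has three parts: (i) the \Union-\Find algorithm outputs the same matching as Algorithm~\ref{alg:match_new_search}; (ii) that algorithm outputs the same matching as Algorithm~\ref{alg:Match_greedy2-prime}, which in turn equals the classic greedy matching of Algorithm~\ref{alg:Match_greedy1}, and this matching is maximum; (iii) the runtime is $O(n)$.

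For (i), I would argue by induction over the leaf-to-root processing order that Invariants (M1) and (M2) hold before each node is processed, as long as the root has not been processed. Before processing $\node=t(z)$, (M1) says the \Union operations performed are exactly those of the nodes in $U=V(\calT)\setminus F$. Here $F$ is the set of top nodes of free vertices, and $F$ contains the root because the root is a top node (by assumption) and is processed last, so its vertex is still free unless it got matched earlier. In the latter case, a quick check shows the root is never united with anything, since $\parent$ is undefined there, so the partition still behaves as if $F$ contained the root.

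Proposition~\ref{lem:find-correct} then gives $\Find(b(z))=\search(b(z))$. So the node $c$ computed, the test $c\neq\node$, and the partner $y$ coincide with those of Algorithm~\ref{alg:match_new_search}. After adding $yz$, the two unions on $c=t(y)$ and $\node=t(z)$ remove exactly the top nodes of the two newly matched vertices from $F$, which restores (M1). If $c=\node$, nothing changes; $z$ remains free, and $t(z)$ stays in $F$, which is correct since $z$ is still free.

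For (ii), the equivalence of Algorithm~\ref{alg:match_new_search} with Algorithm~\ref{alg:Match_greedy2-prime} was argued in the text before Algorithm~\ref{alg:match_new_search}, using Observation~\ref{obs:inside} and the fact that top nodes along the path from $b(z_q)$ to $t(z_q)$ appear in increasing order of index. The equality with Algorithm~\ref{alg:Match_greedy1}, and maximality of the resulting matching, I will take from Appendix~\ref{sec:equiv} and \cite{DahlhausKarpinski}.

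The step I expect to need the most care is (iii), the runtime, since the Gabow--Tarjan bound applies only in the static tree set union setting. I would check its hypotheses explicitly. Every union is of the form $\Union(\parent(\node)\leftarrow\node)$, so the union tree is the fixed, known tree $\calT$. Each node is united at most once: non-top nodes at initialization, and top nodes only when their vertex becomes matched, which happens once per vertex. Hence there are at most $N-1$ unions with $N=|\calT|$, and at most $n$ \Find operations. The bound $O(k+N)$ then gives $O(n+|\calT|)=O(n)$ total, using $|\calT|\in O(n)$ from the assumption on the representation.

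All remaining work is constant time per node and per vertex, given the constant-time lookups of $t(z)$, $b(z)$, $\parent$, and the vertex whose top node is a given node that were prepared in Section~\ref{sec:background}.
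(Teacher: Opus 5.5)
Your proposal is correct and follows the paper's own argument: (M1) and (M2) via Proposition~\ref{lem:find-correct}, the chain of equivalences from the \Union-\Find algorithm through Algorithms~\ref{alg:match_new_search}, \ref{alg:Match_greedy2-prime} and~\ref{alg:Match_greedy1} with optimality from \cite{DahlhausKarpinski} or Appendix~\ref{sec:optimality}, and the static tree set union bound of \cite{GabowTarjan85}; your explicit count (each node united at most once, at most $n$ \Find operations) is a slightly cleaner check than the paper's. One small remark: your hedge about the root's vertex being matched before the root is processed covers a case that cannot occur, since a vertex $y$ is only ever matched while processing $t(y)$ or a strict ancestor of $t(y)$, so the root stays in $F$ until it is itself processed.
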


\paragraph{Example.}
On the running example of Figure~\ref{fig:running}, the top nodes are
processed in the order
$t(z_2)$, $t(z_1)$, $t(z_3)$, $t(z_5)$, $t(z_4)$, $t(z_7)$, $t(z_6)$.
At $t(z_2)$, we have $\Find(b(z_2))=t(z_2)$, since $z_2$ is free and
its path consists of a single node; so nothing happens.
At $t(z_1)$, $\Find(b(z_1))=t(z_2)$, so the edge $z_2z_1$ is added to $M$,
and $t(z_2)$ and $t(z_1)$ are united with their parents.
At $t(z_3)$, nothing happens, and at $t(z_5)$, $\Find(b(z_5))=t(z_3)$,
so the edge $z_3z_5$ is added.
At $t(z_4)$ and $t(z_7)$, the search returns the node itself:
the neighbors $z_1$, $z_2$, $z_5$ of $z_4$ whose top nodes lie below
$t(z_4)$ are already matched, and $z_7$ has no neighbor whose top node is
lower than $t(z_7)$.
Figure~\ref{fig:match-example} shows the components of the
\Union-\Find data structure at this point, before the root $t(z_6)$ is
processed.
Finally, $\Find(b(z_6))=t(z_4)$, and the edge $z_4z_6$ is added.
The resulting matching $\{z_2z_1,z_3z_5,z_4z_6\}$ is maximum, since $G$
has only seven vertices.

\begin{figure}[ht]
\centering
\begin{tikzpicture}[x=1.75cm,y=0.95cm,
        nd/.style = {draw,circle,fill=white,inner sep=0pt,minimum size=5pt},
        fr/.style = {draw,circle,fill=black,inner sep=0pt,minimum size=8pt},
        tree/.style = {->,>=stealth,thin,shorten >=1pt},
        cut/.style = {thin,dashed}]
\node[fr,label=right:{\small$t(z_6)$}] (r)  at (3,0) {};
\node[fr,label=right:{\small$t(z_7)$}] (s)  at (3,-1) {};
\node[nd,label=left:{\small$b(z_7)$}] (l7) at (1.6,-2) {};
\node[fr,label=right:{\small$t(z_4)$}] (p)  at (4,-2) {};
\node[nd,label=left:{\raise 1,9ex\hbox{\small$t(z_5)$}}] (q)  at (4,-3) {};
\node[nd,label=left:{\small$t(z_1)$}] (x)  at (2.5,-4) {};
\node[nd,label=left:{\small$t(z_2)$}] (y)  at (2.5,-5) {};
\node[nd,label=below:{\small$b(z_1)$}] (b1) at (1.9,-6) {};
\node[nd,label=below:{\small$b(z_4)$}] (b4) at (3.1,-6) {};
\node[nd,label=right:{\small$t(z_3)$}] (g)  at (4.4,-4) {};
\node[nd,label=right:{\small$b(z_3)$}] (b3) at (4.4,-5) {};
\node[nd,label=below:{\small$b(z_5)$}] (b5) at (4.4,-6) {};
\node[nd,label=right:{\small$b(z_6)$}] (b6) at (5.5,-4) {};
\foreach \c/\p in {l7/s,q/p,x/q,g/q,b6/q,y/x,b1/y,b4/y,b3/g,b5/b3}
  \draw[tree] (\c) -- (\p);
\foreach \c/\p in {s/r,p/s}
  \draw[cut] (\c) -- (\p);
\end{tikzpicture}
\caption{The \Union-\Find components for the running example of
  Figure~\ref{fig:running} just before the root $t(z_6)$ is processed by
  the algorithm for {\sc Matching}.
  The free vertices are $z_4$, $z_6$, and~$z_7$; their top nodes are
  drawn as bullets, and the edges from these nodes to their parents
  (dashed) have not been united.
  $\Find(b(z_6))$ returns $t(z_4)$, the root of the component containing $b(z_6)$.}
\label{fig:match-example}
\end{figure}

\section{Outlook}
\label{sec:outlook}


An obvious question is whether we 
really
need the \Union-\Find
data structure for {\sc Matching}.  (This is the most complicated part
of our algorithm, especially if we want runtime $O(n)$ rather than
$O(n\alpha(n))$.)   Can we achieve $O(n)$ runtime 
\iflong 
while storing \else  with \fi only
elementary data at each node?   
\iflong \else \looseness-1 \fi

Another open problem concerns transferring the results to other
graph classes that have intersection representations with restricted 
types of objects.
As mentioned, $O(n\log\log n)$-time algorithms are
known for {\sc Matching} in circular arc graphs \cite{LiangRhee} and 
permutation graphs \cite{RheeLiang}.
These graph classes are unrelated
to RDV graphs, and they are not even chordal.   Can 
techniques like ours be
used to reduce the runtime in these graph classes to $O(n)$ as well,
or at least to avoid the use of van Emde Boas trees?

The AI assistant Claude (Anthropic) was used during the preparation of
this paper for proofreading (consistency of notation and terminology,
typos, wording, and style), for suggestions on the organization of the
presentation, and for drafting the worked examples, including
Figures~\ref{fig:running} and~\ref{fig:match-example} and
Table~\ref{tab:DS-example}; the values in the examples were checked by
running implementations of the algorithms.
The algorithms, proofs, and results are the work of the authors, who
have checked all suggested changes and take full responsibility for
the content of the paper.


\bibliographystyle{plainurl}
\bibliography{RDV}

\appendix

\section{Streamlined Pseudocode for our Three Algorithms}
\label{sec:streamlined}


\subsection{Streamlined Algorithm for {\sc IndependentSet}}
\label{sec:MIS-abstract}

\begin{algorithm}[ht]
\caption{Streamlined algorithm for {\sc IndependentSet}}
\label{alg:MIS2}
\Input{RDV graph $G$ with RDV representation $(\calT,t,b)$}
initialize $\I:=\emptyset$, set $\state(z):=\available$ for all $z\in V(G)$, $\LL(\node):=\emptyset$ for all $\node\in V(\calT)$\;
\For{$\node\in V(\calT)$, in leaf-to-root order}{
\lFor{all children $c$ of $\node$}{append $\LL(c)$ to $\LL(\node)$ (this empties $\LL(c)$)}
\lIf{$\node=b(z)$ for some $z\in V(G)$}{add $z$ to $\LL(\node)$}
\If{$\node=t(z)$ for some $z\in V(G)$ and $\state(z)=\available$} {
	add $z$ to $\I$\; 
	\lFor{all $y\in \LL(\node)$}{delete $y$ from $\LL(\node)$ and set $\state(y):=\blocked$}
}
}
\Output{$\I$}
\end{algorithm}

\subsubsection{A Numeric Abstraction for {\sc IndependentSet}}



We can reduce the algorithm even more: 
Instead of storing a list $\LL(\node)$ we store only the lowest top node
of a vertex in $\LL(\node)$:
\begin{align*}
  \ell(\node) :={}&
  \max\ \{\, \depth(t(z)) : z \in \LL(\node)\,\}
  \\={}&
  \max\ \{\, \depth(t(z)) : z \in \beta(\node)
  \text{ and $z$ is \available}\,\}
\end{align*}
For any empty list, we take the
maximum to be the sentinel value $-1$ (which has the same effect as the
 logically sounder value $-\infty$).
The value $\ell(\node)$ allows us to find out when a vertex should be
added to~$I$.
The reason is that all vertices $z\in L(\node)\subseteq \beta(\node)$ have
$\depth(t(z))\le \depth(\node)$, by definition.
Thus, if $\ell(\node)= \depth(\node)$, it means that there must be
some \available vertex $z\in \LL(\node)$ with
$\depth(t(z))= \depth(\node)$,
and hence $t(z)=\node$.
%
\paragraph{Processing node~$\node$.}
We assume that all
strict descendants of $\node$  have already been processed.
\begin{enumerate}
\item (Accumulate the data from the children.)  
Set $\ell(\node) := \max\ \{\, \ell(c) : c \text{ a child of }\node\,\}
$.
\item (Update if a path starts.)
If $\node$ is a bottom node, say $\node=b(z)$:
\begin{itemize}
    \item 
Set $\ell(\node) := \max \{\ell(\node), \depth(t(z))\}$.
\end{itemize}
\item (Add to solution.)   If
 $\ell(\node)=\depth(\node)$, then
  $\node$ must be a top node, say $\node=t(z)$.
  \begin{itemize} 
		\item Add $z$ to~$\I$. /\!/ $z$ was \available until now.
		\item
                  Set $\ell(\node):=-1$. /\!/ Empty the list $\LL(\node)$.
  \end{itemize}
\end{enumerate}
\iflong
Algorithm~\ref{alg:MIS3} summarizes the complete algorithm in streamlined form.

\begin{algorithm}[ht]
\caption{An even more streamlined algorithm for {\sc IndependentSet}}
\label{alg:MIS3}
\Input{RDV graph $G$ with RDV representation $(\calT,t,b)$}
initialize $\I:=\emptyset$\;
\For{$\node\in V(\calT)$, in leaf-to-root order}{
set $\ell(\node) := \max\,(\{\, \ell(c) : c \text{ a child of }\node\,\}\cup\{-1\})$\;
\lIf{$\node=b(z)$ for some $z\in V(G)$}{set $\ell(\node) := \max \{\ell(\node), \depth(t(z))\}$}
\If{$\node=t(z)$ for some $z\in V(G)$ and $\ell(\node)=\depth(\node)$} {
	add $z$ to $\I$\;
	set $\ell(\node):=-1$ /\!/ Empty the list $\LL(\node)$.
}
}
\Output{$\I$}
\end{algorithm}
\fi

\subsection{Streamlined Algorithm for {\sc DominatingSet}
and {\sc Matching}}
\label{app:match-UF}
\label{app:shorterAlgoDS}
Algorithm~\ref{alg:D3} shows a streamlined version of the algorithm for {\sc DominatingSet} from Section~\ref{sec:mds3}, 
which uses a global set~$D$.

\begin{algorithm}[ht]
\caption{Streamlined algorithm for {\sc DominatingSet}}
\label{alg:D3}
\Input{RDV graph $G$ with RDV representation $(\calT,t,b)$}
initialize $D:=\emptyset$\;
\For{$\node\in V(\calT)$, in leaf-to-root order}{
set $h(\node) := \min\,(\{\, h(c) : c \text{ a child of
}\node\,\}\cup\{\infty\})$\;
\If{$\h(\node) < \infty$}{let $c^*$ be the child of $\node$ with
  $h(c^*)=h(\node)$, and
  set $\highest(\node):=\highest(c^*)$
}
set $d(\node) := \min\,(\{\, d(c) : c \text{ a child of
}\node\,\}\cup\{\infty\})$\;
{set $u(\node) := \max\,(\{\, u(c) : c \text{ a child of
  }\node\,\}\cup\{-\infty\})$}\;
\lIf{$\d(\node) < \infty$}{set $\u(\node):=-\infty$}

\If{$\node=b(z)$ for some $z\in V(G)$}{
\lIf{$\depth(t(z))<\h(\node)$}{update $\highest(\node):=z$, $h(\node):=\depth(t(z))$}
\lIf {$\d(\node)=\infty$}{update $u(a) := \max \{ u(a),
  \depth(t(z))\}$}
}
\If{$\node=t(z)$ for some $z\in V(G)$}
{
\If{$\u(\node)=\depth(\node)$}
{add $\highest(\node)$ to $\D
  $,
set $\d(\node):=\h(\node)$,
set $\u(\node):=-\infty$\;
}
\lIf{$\h(\node)=\depth(\node)$}{set $
  \h(\node):=\infty$}
\lIf {$\d(\node)=\depth(\node)$}{set $
  \d(\node):=\infty$}
}
}
\Output{$D$}
\end{algorithm}

Algorithm~\ref{alg:match_UF} shows the algorithm
for  \textsc{Matching} from
Section~\ref{sec:match} in terms of \Union-\Find operations.

\begin{algorithm}[ht]
  \caption{
    Streamlined algorithm for {\sc Matching} in terms of 
    \Union-\Find operations}
  \label{alg:match_UF}
\Input{RDV graph $G$ with RDV representation $(\calT,t,b)$}
initialize the matching $\M:=\emptyset$\;
initialize a
\Union-\Find data structure where every node of $\calT$
forms a singleton set\;
\lFor{every node $\node$ of $\calT$
that is not a top node}{
$\Union(\parent(\node)\leftarrow\node)$}
\For{each node $a\in V(\calT)$, in leaf-to-root order}{
  \If{$a=t(z)$ for some vertex $z$}{
    compute $c:=\Find(b(z))$\;
    \If {$c\neq a$}{
        let $y$ be the vertex with $t(y)=c$\;
        add edge $yz$ to $\M$\;
 $\Union(\parent(c)\leftarrow c)$\;
\lIf {$a$ is not the root of $\calT$}{\Union$(\parent(a)\leftarrow a)$}        
      }
    }
}
\Output{$\M$}
\end{algorithm}

\section{Equivalence of Algorithms~\ref{alg:Match_greedy1}
  and~\ref{alg:Match_greedy2-prime}}
\label{sec:equiv}

In this section, we argue that the greedy algorithm for {\sc Matching} (Algorithm~\ref{alg:Match_greedy1}) produces exactly the same output as the delayed greedy algorithm (Algorithm~\ref{alg:Match_greedy2-prime}).   Both algorithms were stated with a vertex ordering specifically designed for RDV graphs, but they can in fact be applied to \emph{any} vertex ordering (as long as optimality of the result is not required), and we argue equivalence even in this situation.

\begin{lemma}
  \label{lem:2-equivalent}
For any graph $G$ (not necessarily an RDV graph) and any fixed vertex ordering $z_1,\dots,z_n$,
Algorithm~\ref{alg:Match_greedy2-prime} produces the same matching as
Algorithm~\ref{alg:Match_greedy1} when both are applied with this ordering. 
\end{lemma}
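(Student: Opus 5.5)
We prove the lemma by showing that both algorithms choose matching edges in the same way: an edge $z_iz_q$ with $i<q$ belongs to the output of either algorithm exactly when $z_i$ and $z_q$ are both ``still free'' with respect to edges involving smaller-indexed vertices in a suitable sense. Write $M_1$ for the output of Algorithm~\ref{alg:Match_greedy1} and $M_2$ for the output of Algorithm~\ref{alg:Match_greedy2-prime}. Orient every edge $z_iz_q$ of $G$ as $(i,q)$ with $i<q$. The strategy is to order all edges by comparing first the smaller index and then the larger one, and to show that both algorithms make identical decisions on every edge in this order.

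\textbf{Edge-by-edge characterization.}
For Algorithm~\ref{alg:Match_greedy1}, processing $z_i$ adds $z_iz_q$ for the least $q>i$ with $z_q$ free. We claim that $z_iz_q\in M_1$ holds iff two conditions are met. First, $z_i$ is not matched by any edge of $M_1$ whose pair lexicographically precedes $(i,q)$. Second, $z_q$ is not matched by any such edge, and no $q'$ with $i<q'<q$ has $z_iz_{q'}\in E$ and $z_{q'}$ free at that time. In words, $M_1$ is the lexicographically-first greedy matching under the order (smaller index, larger index). One can verify directly that this is exactly what the scan produces. When $z_i$ is processed, all edges already in $M_1$ have smaller first index, so they precede $(i,q)$, and $z_i$ itself may already be matched only through such an edge.

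\textbf{Applying the same to Algorithm~\ref{alg:Match_greedy2-prime}.}
The same description should hold, but with the order (larger index, smaller index). Consequently, the core of the proof is showing that, for this particular selection rule, the two orders give the same matching. I would argue by induction on $q$, with the following hypothesis: after Algorithm~\ref{alg:Match_greedy2-prime} has processed $z_1,\dots,z_q$, its matching equals the set of edges of $M_1$ with both endpoints in $\{z_1,\dots,z_q\}$. Equivalently, the hypothesis is $M_2\cap E(G[z_1..z_q]) = M_1\cap E(G[z_1..z_q])$.

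For the inductive step, let $z_i$ (with $i<q$) be the neighbor of $z_q$ that Algorithm~\ref{alg:Match_greedy2-prime} picks: the minimum-index neighbor of $z_q$ that is free with respect to $M_1$ restricted to $\{z_1,\dots,z_{q-1}\}$. We must show that $z_iz_q\in M_1$, and that if no such $z_i$ exists then $z_q$ has no $M_1$-partner with smaller index. The argument goes as follows.

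1. Because $z_i$ is free in the prefix, Algorithm~\ref{alg:Match_greedy1} did not match $z_i$ to anything with index below $q$. Hence, when it processed $z_i$, it either matched $z_i$ to some $z_{q''}$ with $q''\ge q$, or $z_i$ was already matched from an earlier vertex $z_j$ with $j<i$, and then $z_j$'s partner index, namely $i$, is less than $q$. The latter contradicts freeness in the prefix, so this case cannot occur.

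2. So Algorithm~\ref{alg:Match_greedy1} matched $z_i$ to the least free neighbor of index greater than $i$. We need that neighbor to be $z_q$. The vertex $z_q$ is a candidate: it is a neighbor, and it is still free at step $i$. It is free because any earlier partner $z_j$ of $z_q$ with $j<i$ would lie in the prefix and be matched to $z_q$ there, contradicting the fact that Algorithm~\ref{alg:Match_greedy2-prime} is matching $z_q$ now.

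3. Every free neighbor $z_{q'}$ of $z_i$ with $i<q'<q$ was already matched, in the prefix matching, before step $i$ of Algorithm~\ref{alg:Match_greedy1}. Otherwise $z_{q'}$ would have selected $z_i$ or a smaller-indexed free vertex in Algorithm~\ref{alg:Match_greedy2-prime}, making $z_i$ non-free in the prefix.

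4. Minimality of $i$ rules out $z_q$ being matched in $M_1$ to a smaller free index.

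The converse direction (no candidate implies no $M_1$-partner below $q$) is symmetric.

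\textbf{Main obstacle.}
The hard step is item 3: ruling out that $z_i$ prefers some $z_{q'}$ with $q'<q$ in Algorithm~\ref{alg:Match_greedy1}. Handling it requires carefully relating ``free at time $i$ in Algorithm~\ref{alg:Match_greedy1}'' to ``free within the prefix in Algorithm~\ref{alg:Match_greedy2-prime}'' using the induction hypothesis. I would handle it by a small exchange argument. If $z_{q'}$ were free at step $i$ and adjacent to $z_i$, then when Algorithm~\ref{alg:Match_greedy2-prime} processed $q'$, the vertex $z_i$ was a free candidate. So $z_{q'}$ was matched then to some $z_j$ with $j\le i$, and $j=i$ is excluded by freeness of $z_i$. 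Hence $j<i$. By the induction hypothesis, $z_jz_{q'}\in M_1$ was added at step $j<i$, which contradicts $z_{q'}$ being free at step $i$.
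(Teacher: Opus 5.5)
Your proof is correct, but it takes a different route from the paper's. The paper inducts on the number of vertices by peeling off $z_1$. Algorithm~\ref{alg:Match_greedy1} matches $z_1$ with its lowest-indexed neighbor $z_k$. In Algorithm~\ref{alg:Match_greedy2-prime}, none of $z_2,\dots,z_{k-1}$ is adjacent to $z_1$, so $z_1$ is still free when $z_k$ is processed, and $z_k$ picks it because index $1$ is minimal. From then on $z_1$ and $z_k$ are invisible to both algorithms, so one recurses on $G-\{z_1,z_k\}$.

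You instead keep both algorithms on the whole graph and prove a prefix invariant: after step $q$, the delayed greedy's matching equals $M_1\cap E(G[z_1,\dots,z_q])$. Your items 1--4 then check that the partner chosen by $z_q$ is exactly the partner that Algorithm~\ref{alg:Match_greedy1} gave $z_i$ at step $i$. This yields a finer statement: Algorithm~\ref{alg:Match_greedy2-prime} discovers the edges of $M_1$ in order of their larger endpoint. The price is a longer case analysis. The paper's peeling argument is shorter, and it is reused almost verbatim in the paper's optimality proof (the Claim plus the same induction). The lexicographic characterizations in your first two paragraphs are never used afterwards and can be dropped.

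One justification needs repair. In item 2 you argue that an earlier $M_1$-partner $z_j$ of $z_q$ ``would lie in the prefix and be matched to $z_q$ there''. The induction hypothesis covers only edges with both endpoints in $\{z_1,\dots,z_{q-1}\}$, and $z_jz_q$ is not such an edge. Nor does the fact that $z_q$ is being matched now contradict $z_jz_q\in M_1$. The correct reason is your item 4. If $z_jz_q\in M_1$ with $j<i$, then the unique $M_1$-partner of $z_j$ lies outside the prefix. By the hypothesis, $z_j$ is therefore free for Algorithm~\ref{alg:Match_greedy2-prime} at step $q$, and it is adjacent to $z_q$. This contradicts the minimality of $i$.

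The ``converse direction'' is not really symmetric, but it is the same one-line argument with $j<q$ arbitrary.

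Finally, your treatment of item 3 uses the fact that Algorithm~\ref{alg:Match_greedy1} adds each edge at the step of its smaller-indexed endpoint (the remark $q>i$). This holds because a vertex that is free at its own step has no free earlier neighbors. It is worth saying explicitly.
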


\begin{proof}
The proof is by induction on the number of vertices.
If $z_1$ is an isolated vertex, it is left
unmatched by both
algorithms, and the removal 
of $z_1$ does not change anything for either algorithm.

Otherwise, 
we first argue that $z_1$ is
 matched to the same vertex $z_k$ by both
 algorithms:
Initially, all vertices are free. 
Suppose that Algorithm~\ref{alg:Match_greedy1}
matches $z_1$ to $z_k$ because $z_k$ is the lowest-numbered neighbor
of~$z_1$.
Then, when
Algorithm~\ref{alg:Match_greedy2-prime} successively considers the
vertices $z_q=z_2,z_3,\ldots,z_{k-1}$, none of these vertices is
adjacent to~$z_1$; hence $z_1$ is still free when
$z_q=z_k$ is considered, and the algorithm will of course
choose the edge $z_1z_k$.

We then observe that the removal of $z_1$ and $z_k$ does not affect
the
decisions about the remaining vertices.
This is obvious for
Algorithm~\ref{alg:Match_greedy1}, because
$z_1$ and $z_k$ are matched in the first round, and then the
vertices  $z_1$ and $z_k$ are no longer free and out of the game.

In Algorithm~\ref{alg:Match_greedy2-prime}, none of the
vertices $z_q=z_2,z_3,\ldots,z_{k-1}$ is
adjacent to $z_1$, and hence the absence of $z_1$ is not noticed.
Neighbors $z_i$ with $i>q$ such as $z_i=z_k$ do not play a role.
Later, 
when the vertices $z_q=z_{k+1},z_{k+2},\ldots$ are considered,
 $z_1$ and $z_k$ are no longer free in the first place, and hence their removal makes no difference.

 Thus, we can remove
 $z_1$ and $z_k$
 (or only $z_1$, in case $z_1$ is isolated)
 from the input without changing
 the course of the algorithms on the remaining vertices. Thus, by induction,
the remaining matching edges are also the same.
\end{proof}

\subsection{Optimality Proof for Algorithms~\ref{alg:Match_greedy1}
  and~\ref{alg:Match_greedy2-prime}}
\label{sec:optimality}
Extending the arguments in the proof of Lemma~\ref{lem:2-equivalent},
optimality for
 Algorithms~\ref{alg:Match_greedy1}
 and~\ref{alg:Match_greedy2-prime} can be established by an easy exchange argument.
 For convenience, we include this self-contained proof here.
 In contrast to Lemma~\ref{lem:2-equivalent}, we must now of course
 insist that the ordering $z_1,\dots,z_n$ 
 places $z_i$ before $z_j$ whenever
 $t(z_i)$ is a child of $t(z_j)$.
 \therese{sell this a bit better}

As mentioned in the main text, optimality of the greedy algorithm 
is known
for a
strong perfect elimination order in a
strongly chordal graph, 
due to
Dahlhaus and Karpinski~\cite{DahlhausKarpinski}.
For RDV graphs, it is not hard to show that such a strong perfect elimination order
is obtained with
the leaf-to-root order
of top nodes that we assumed in Algorithms~\ref{alg:Match_greedy1}
and~\ref{alg:Match_greedy2-prime}.
(For the reverse level order of $\calT$, this has been shown in
\cite{BG-CGT26,
PrashantThesis}.)
However, instead of deriving optimality from
the results on
strongly chordal graphs
~\cite{DahlhausKarpinski},
our proof considers 
RDV graphs directly.

 \begin{theorem}
The greedy Algorithms~\ref{alg:Match_greedy1}
 and~\ref{alg:Match_greedy2-prime} compute a maximum matching.   
 \end{theorem}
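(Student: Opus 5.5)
By Lemma~\ref{lem:2-equivalent} the two algorithms output the same matching, so it suffices to prove that Algorithm~\ref{alg:Match_greedy1} returns a maximum matching. The plan is induction on the number of vertices, peeling off $z_1$ (and its partner) exactly as in the proof of Lemma~\ref{lem:2-equivalent}.

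Since $z_1,\dots,z_n$ is a leaf-to-root ordering of top nodes, restricting to any subset of vertices keeps the ordering leaf-to-root for the induced subgraph. Moreover, the induced subgraph is again an RDV graph, represented by the same tree with the surviving paths. So the induction hypothesis applies to $G-z_1$ and to $G-\{z_1,z_k\}$ with the inherited order.

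The steps, in order:
\begin{enumerate}
\item If $z_1$ is isolated, both the greedy algorithm and any maximum matching ignore it. Remove $z_1$ and apply induction.
\item Otherwise, greedy matches $z_1$ with its lowest-indexed neighbour $z_k$. By the proof of Lemma~\ref{lem:2-equivalent}, the rest of the run is the greedy algorithm on $G':=G-\{z_1,z_k\}$ with the induced order. By induction this is a maximum matching of $G'$.
\item It therefore suffices to show an exchange property: some maximum matching $M^*$ of $G$ contains $z_1z_k$. Then $|M|=1+\nu(G')\ge 1+|M^*\setminus\{z_1z_k\}|=|M^*|$.
\end{enumerate}

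The key structural fact for step~3 is that every neighbour of $z_1$ is adjacent to $z_k$, i.e., $N[z_1]\subseteq N[z_k]$. To see this, note first that $t(z_1)$ has no strict descendant that is a top node, since such a vertex would precede $z_1$. Hence, by Observation~\ref{obs:inside}, every neighbour $w$ of $z_1$ lies in $\beta(t(z_1))$, so $N[z_1]=\beta(t(z_1))$ is a clique and in particular contains $z_k$.

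For the exchange, take a maximum matching $M^*$ and split into cases:
\begin{itemize}
\item If $z_1$ is unmatched in $M^*$, then $z_k$ must be matched, say to $x$, by maximality; replace $z_kx$ with $z_1z_k$.
\item If $z_1$ is matched to $y\ne z_k$ and $z_k$ is unmatched, replace $z_1y$ with $z_1z_k$.
\item If $z_1y$ and $z_kx$ are both in $M^*$, replace them by $z_1z_k$ and $yx$. This requires $xy\in E$. We know $y\in\beta(t(z_1))$, so $P(y)$ contains $t(z_1)$ and also $z_k$ does. We need $P(x)$ to meet $P(y)$, knowing only that $P(x)$ meets $P(z_k)$.
\end{itemize}

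This last case is the main obstacle. It is exactly where the choice of $z_k$ as the neighbour with the \emph{lowest} top node (smallest index) must be used, since ``$z_k$ has the smallest index'' means $t(z_k)$ is as low as possible among the vertices of $\beta(t(z_1))\setminus\{z_1\}$.

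Both $P(y)$ and $P(z_k)$ contain $t(z_1)$ and run upward to their top nodes along the common root path. Because $k$ is minimal, $t(z_k)$ is at most as high as $t(y)$, so $P(z_k)\cap\mathrm{anc}(t(z_1))\subseteq P(y)$, where $\mathrm{anc}$ denotes ancestors. Now $P(x)$ meets $P(z_k)$ at some node $c$. If $c$ is an ancestor of $t(z_1)$, then $c\in P(y)$ and we are done. Otherwise $c$ is a strict descendant of $t(z_1)$ on $P(z_k)$. Then $t(x)$ must be an ancestor of $t(z_1)$, because $x$ comes after $z_1$ and no top node lies strictly below $t(z_1)$. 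So $P(x)$ runs from $c$ up through $t(z_1)$ and meets $P(y)$ at $t(z_1)$. If this case analysis has a gap, for instance if $P(x)$ can leave $P(z_k)$ in some other way, I would instead try to derive directly that $N[y]\cup N[z_k]$ forms a suitable nested structure.
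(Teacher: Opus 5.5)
Your proof is correct and follows the paper's argument: induction that peels off $z_1$ and its greedy partner, plus an exchange argument. The crucial case needs $x\in N[y]$, and your case split on the meeting node $c$ is exactly the paper's observation that, since no top node lies strictly below $t(z_1)$, only the parts of the paths above $t(z_1)$ matter, which gives $N[w_1]\subseteq N[w_i]$. That case split is exhaustive (every node of $P(z_k)$ is comparable with $t(z_1)$), so your closing hedge is unnecessary; also, the inclusion you label as key, $N[z_1]\subseteq N[z_k]$, is not what the exchange uses; what it actually uses, and what you prove, is $N[z_k]\subseteq N[y]$.
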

 \begin{proof}
The argument is based on the following claim:
   \begin{claim}\label{edge-optimal}
     If the algorithm matches $z_1$ with $z_q$ then there is a maximum
     matching 
     that matches $z_1$ with~$z_q$.
   \end{claim}
   From this, optimality follows by induction on $n$, along the lines
   of
   the proof of  Lemma~\ref{lem:2-equivalent}:
   If $z_1$ is matched with $z_q$, then the algorithm, when restricting our
   attention to the remaining vertices,
   behaves as if $z_1$ and $z_q$ were never present, and thus, by
   induction, it computes a maximum
   matching
   on $G\setminus \{z_1,z_q\}$. By the claim, that matching together
   with the edge $z_1z_q$ is a maximum matching for~$G$.

   If $z_1$ is an isolated vertex, then again, the algorithm behaves
   as if $z_1$ were never present, and the argument is analogous. 
 \end{proof}

\begin{proof}
  [Proof of Claim~\ref{edge-optimal}]
  Let $w_1,w_2,\ldots,w_k$ be the neighbors of $z_1$, in the order
  in which they appear in the list $z_1,\ldots,z_n$. Thus, the
  algorithm matches $z_1$ with $w_1=z_q$.

  We apply Observation~\ref{obs:inside} to the vertices~$w_i$:
  Let $\node=t(z_1)$. Then
%
  either    $w_i\in \beta(\node)$, or
its top node $t(w_i)$
is a strict descendant of~$\node$.
However, there are no top nodes
that are strict descendants of $\node$, because the corresponding
vertices would have to come before~$z_1$.

 We conclude that all neighbors $w_i$ are in $\beta(\node)$.
 Their
top nodes lie on the path from $\node$ to the root.
This has an important consequence: When checking adjacency
with $w_i$, only the path from $\node$ to $t(w_i)$ matters, and the
path below $\node$ can be ignored. The reason is that there are no top nodes
below~$\node$.

By our assumption on the ordering, the top nodes
$t(w_1),t(w_2),\ldots,t(w_k)$ appear in this order
on the path from $\node$ to the root, and consequently,
the paths from $\node$ to $t(w_i)$ (which determine the
neighbors of $w_i$) are nested.
From this we conclude that the neighborhoods are nested as well:
$N[w_1] \subseteq N[w_2] \subseteq \dots \subseteq N[w_k]$.
Intuitively, this means that 
the best matching partner to choose greedily for $z_1$
is $w_1$,
because all other 
$w_i$ have more neighbors and thus more possibilities to be matched in the remaining graph.

Formally, we prove the claim as follows.
Let $M^*$ be a maximum matching.
Since it is a maximum matching, it
cannot leave both $z_1$ and $w_1$ unmatched.

If $M^*$ matches only one of the vertices $z_1$ and $w_1$, we can always include
the edge $z_1w_1$, at the expense of removing the
matching edge incident to $z_1$ or $w_1$,
and we get a
matching that is no smaller than before.

Finally, suppose that 
$M^*$  matches $z_1$ with some vertex $w_i$ with $i>1$, and
$w_1$ is matched with some vertex $v\in N[w_1] $.
Then we can replace these two edges by the edges $z_1w_1$ and~$w_iv$.
The edge $w_iv$ must exist, since $v\in N[w_1] \subseteq N[w_i]$.
   \end{proof}
   
\end{document}

\iflong

\section{Spontaneous Optimality Proof for Everything}
\label{sec:everything}

\Guenter{while preparing for the Mittagsseminar}

\begin{lemma}\label{nested}
  Let $z_1$ be a vertex such that no other vertex has a top node that
  is a descendant of $a=t(z_1)$.

Then all neighbors of $z_1$ have their top nodes on the path from $a$ to the root. 
Let $w_1,w_2,\ldots,w_k$ be the neighbors of $z_1$,  ordered
according to the appearance of their top nodes on this path:
[$\depth(t(w_1))> \depth(t(w_2))> \cdots >\depth(t(w_k))$.]
Then
\begin{displaymath}
  N[z_1] \subseteq
  N[w_1] \subseteq
  N[w_2] \subseteq \dots \subseteq N[w_k].
\end{displaymath}
\end{lemma}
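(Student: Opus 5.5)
First, every neighbor lies in $\beta(\node)$. Take a neighbor $w$ of $z_1$ with $w\neq z_1$. By Observation~\ref{obs:inside}, either $w\in\beta(\node)$ or $t(w)$ is a strict descendant of $\node$. The second case is excluded by hypothesis. Hence $P(w)$ contains $\node$. Since $P(w)$ is directed towards the root, $t(w)$ is an ancestor of $\node$ (or $\node$ itself, only if $w=z_1$). So all top nodes of neighbors lie on the path from $\node$ to the root. Because top nodes are distinct, they have distinct depths, which justifies the strict ordering. Note that $z_1$ itself is formally in $N[z_1]$ with $t(z_1)=\node$. I would treat $z_1$ as $w_0$, the element with the deepest top node, so that the chain starts at $N[z_1]$.

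The key step is this characterization: for any $w\in\beta(\node)$, a vertex $x$ is in $N[w]$ if and only if $P(x)$ meets the path $Q(w)$ from $\node$ up to $t(w)$. The ``if'' direction is immediate, since $Q(w)\subseteq P(w)$. For ``only if'', suppose $P(x)$ meets $P(w)$ at a node $b$ that is a strict descendant of $\node$. Then $t(x)$ is an ancestor of $b$. Also $t(x)$ is not a strict descendant of $\node$ by hypothesis (this covers $x\neq z_1$; for $x=z_1$ it is trivial). So $t(x)$ is an ancestor of $\node$ or $\node$ itself, and $P(x)$, being the directed path from $b$ to $t(x)$, passes through $\node\in Q(w)$. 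The case $b$ outside the subtree of $\node$ cannot happen, because the part of $P(w)$ below $\node$ lies in that subtree.

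Nesting then follows. If $\depth(t(w_i))>\depth(t(w_j))$, both top nodes lie on the single root path from $\node$, so $Q(w_i)\subseteq Q(w_j)$. Therefore any $P(x)$ meeting $Q(w_i)$ also meets $Q(w_j)$, giving $N[w_i]\subseteq N[w_j]$. The same argument with $Q(z_1)=\{\node\}$ gives $N[z_1]\subseteq N[w_1]$.

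The main obstacle is the ``only if'' direction of the characterization: ruling out intersections strictly below $\node$ that do not also reach $\node$. This is exactly where the hypothesis that no top node lies strictly below $\node$ is used. I would handle it via the directed-path structure as above.
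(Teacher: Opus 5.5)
Your proof is correct and follows essentially the same route as the paper. Observation~\ref{obs:inside} places every neighbor in $\beta(a)$; adjacency to $w_i$ is then determined only by the subpath from $a$ to $t(w_i)$, because no top node lies strictly below $a$; and nested subpaths give nested neighborhoods. The paper merely asserts the middle step, whereas you prove its ``only if'' direction explicitly and also obtain $N[z_1]\subseteq N[w_1]$ by taking $Q(z_1)=\{a\}$.
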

\begin{proof}
  see above.
\end{proof}

\begin{theorem}
  \begin{enumerate}
  \item There is a maximum independent set $I$ that contains $z_1$.
  \item There is a minimum dominating set $D$ that contains $w_k$.
  \item There is a maximum matching $M$ that contains the edge $z_1w_1$.
  \end{enumerate}
\end{theorem}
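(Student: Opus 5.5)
The plan is to first record the nested-neighborhood structure of Lemma~\ref{nested}, and then derive each of the three statements by a standard exchange argument. Throughout, $z_1$ is a vertex with no other top node below $\node=t(z_1)$. The neighbors $w_1,\dots,w_k$ of $z_1$ all lie in $\beta(\node)$ and are ordered by decreasing depth of their top nodes. By Lemma~\ref{nested}, $N[z_1]\subseteq N[w_1]\subseteq\dots\subseteq N[w_k]$. Since $z_1$ itself is in $\beta(\node)$, we also have $N[z_1]\subseteq N[w_i]$ for every $i$, and $N[z_1]=\{z_1,w_1,\dots,w_k\}$ is a clique contained in $\beta(\node)$.

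For the independent set, take any maximum independent set $I^*$. If $z_1\in I^*$ we are done. Otherwise $I^*$ contains at most one vertex of the clique $N[z_1]$. If it contains none, adding $z_1$ contradicts maximality; so it contains exactly one $w_i$. We replace $w_i$ by $z_1$. Any vertex $x\in I^*\setminus\{w_i\}$ adjacent to $z_1$ would lie in $N[z_1]$, which is impossible. Hence $(I^*\setminus\{w_i\})\cup\{z_1\}$ is independent and of the same size.

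For the dominating set, take a minimum dominating set $D^*$. Some vertex of $N[z_1]$ must lie in $D^*$ in order to dominate $z_1$; call it $x$. Then $x=z_1$ or $x=w_i$. In either case $N[x]\subseteq N[w_k]$: for $x=z_1$ this uses $N[z_1]\subseteq N[w_1]\subseteq N[w_k]$, and for $x=w_i$ it uses the chain. Replacing $x$ by $w_k$ therefore keeps $D^*$ dominating and does not increase its size. The degenerate case $k=0$, where $z_1$ is isolated, has to be handled separately. There every dominating set contains $z_1$, so the statement should be read with $w_k:=z_1$ in that case.

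For the matching, we reuse the argument of Claim~\ref{edge-optimal}. Let $M^*$ be maximum. If neither $z_1$ nor $w_1$ is matched, add $z_1w_1$, contradicting maximality. If exactly one of them is matched, or if $z_1w_1\in M^*$, swap in $z_1w_1$ at the cost of removing at most one edge. If $z_1w_i\in M^*$ with $i>1$ and $w_1v\in M^*$, replace these two edges by $z_1w_1$ and $w_iv$. The edge $w_iv$ exists because $v\in N[w_1]\subseteq N[w_i]$, and $v\ne w_i$ because $w_i$ is already matched to $z_1$.

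The main point needing care is Lemma~\ref{nested} itself, namely the nesting $N[w_i]\subseteq N[w_{i+1}]$. The argument is as follows. Every $w_i$ contains $\node$, so its path runs from $\node$ (or below) up to $t(w_i)$. Let $y$ be a neighbor of $w_i$. Then either $P(y)$ meets the segment from $\node$ to $t(w_i)$, or $P(y)$ meets $P(w_i)$ only strictly below $\node$. The second option would force $t(y)$ to be a strict descendant of $\node$, which is excluded. Since the segment for $w_i$ is contained in the segment for $w_{i+1}$, the nesting follows; the same reasoning gives the base inclusion for $z_1$.
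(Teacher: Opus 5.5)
Your proposal is correct and follows the paper's route. All three parts are exchange arguments resting on the nesting $N[z_1]\subseteq N[w_1]\subseteq\dots\subseteq N[w_k]$, and the matching part is the same as the proof of Claim~\ref{edge-optimal}; you differ only cosmetically (using that $N[z_1]$ is a clique in part~1, and swapping out one vertex rather than all of $D^*\cap N[z_1]$ in part~2), and your remark that $w_k$ is undefined when $z_1$ is isolated ($k=0$) is a useful precision the paper omits.
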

\begin{proof}
  \begin{enumerate}
  \item 
  Let $I$ be a maximum independent set. $I$ must contain some vertex
  $v\in N[z_1]$ because otherwise we could add $z_1$ to it.
  If $v\ne z_1$, we replace $v$ by $z_1$. Since
  $N[z_1]\subseteq N[v]$ by Lemma~\ref{nested}, the set remains independent.
  \item 
  Let $D$ be a minimum dominating set. $D$ must contain some vertex
 $v\in N[z_1]$.
  We replace, in $D$, all vertices $v\in N[z_1]$ by $w_k$.
The resulting set $D'$ is still dominating, because 
$N[v]\subseteq N[w_k]$ for all
 $v\in N[z_1]$, by Lemma~\ref{nested}. 
\item
(see above)
\qedhere
\end{enumerate}
\end{proof}
{[Was noch fehlt: removing $z_1$ or $w_k$
  [or the edge $z_1w_1$; da haben wir es schon argumentiert] 
  respectively does not
  change
  the course of the algorithm for the remaining part of the graph.]}
\fi

\end{document}